\documentclass[pdflatex,sn-mathphys-num]{sn-jnl}

\usepackage{graphicx}%
\usepackage{multirow}%
\usepackage{amsmath,amssymb,amsfonts}%
\usepackage{amsthm}%
\usepackage{mathrsfs}%
\usepackage[title]{appendix}%
\usepackage{xcolor}%
\usepackage{textcomp}%
\usepackage{manyfoot}%
\usepackage{booktabs}%
\usepackage{algorithm}%
\usepackage{algorithmicx}%
\usepackage{algpseudocode}%
\usepackage{listings}%

\usepackage{tabularx}
\usepackage{array}
\usepackage{booktabs}
\usepackage{subcaption}

\newcolumntype{Y}{>{\centering\arraybackslash}X}

\theoremstyle{thmstyleone}%
\newtheorem{theorem}{Theorem}
\theoremstyle{thmstyletwo}%

\theoremstyle{thmstylethree}%

\begin{document}

\title[Article Title]{Fast and Scalable Influence Estimation in the
Independent Cascade Model: a unified framework}


\author[1]{\fnm{Matic} \sur{Požar}}\email{matic.pozar@upr.si}

\author*[2,3]{\fnm{Miklós} \sur{Krész}}\email{miklos.kresz@innorenew.eu, miklos.kresz@szte.hu}

\affil[1]{\orgdiv{Department of Information Sciences and Technologies}, \orgname{UP FAMNIT}, \orgaddress{\street{Glagoljaška 8}, \city{Koper}, \postcode{6000}, \country{Slovenia}}}

\affil[2]{\orgdiv{InnoRenew CoE}, \orgname{UP IAM and UP FAMNIT, University of Primorska}, \orgaddress{\street{Titov trg 4}, \city{Koper}, \postcode{6000},\country{Slovenia}}}

\affil[3]{\orgdiv{Department of Applied Informatics}, \orgname{University of Szeged}, \orgaddress{\street{Boldogasszony sgt. 6}, \city{Szeged}, \postcode{6725}, \country{Hungary}}}


\abstract{
\begin{abstract}

Computing diffusion processes on networks is fundamental to applications
including influence maximization, epidemic modeling, information
propagation, and diffusion inference. The Independent Cascade (IC) model is
one of the most widely used models of progressive diffusion, yet exact
computation of its activation probabilities is \#P-hard, while Monte Carlo
simulation becomes costly when accurate diffusion estimates are required
repeatedly. Despite its central role in these applications, approximation
of the IC diffusion function itself, particularly through analytical
methods, has received comparatively little systematic attention relative
to the extensive literature on influence-maximization algorithms.

We present a comprehensive study of analytical approximations for the IC
model. We review a wide range of existing approaches and unify them through
a common framework based on temporal representation, retained propagation
information, and aggregation rule. This perspective exposes connections
between previously distinct approximation families and motivates several
new methods.

We further conduct, to our knowledge, the first large-scale cross-family
comparison of analytical IC approximations for node-level marginal
activation probabilities. Our benchmark spans real-world and synthetic
networks with systematically varied topology, seed-set size,
edge-probability model, and diffusion strength. The proposed second-order
dynamic message-passing method achieves the lowest pooled median error and
the lowest dataset- or graph-family-level median error on every network
considered. The results demonstrate substantial and systematic differences
between approximation families, identify the regimes in which higher-order
corrections are most beneficial, and establish analytical IC approximation
as a rich accuracy--efficiency design space rather than a collection of
isolated methods.
\end{abstract}
}

\keywords{Independent Cascade, Influence Maximization, Cavity Graph}



\maketitle

\section*{Abbreviations}

\begin{table}[ht]
\centering
\caption{List of abbreviations.}
\label{tab:abbreviations}
\small
\begin{tabularx}{\textwidth}{@{}l X l X@{}}
\toprule
\textbf{Abbr.} & \textbf{Meaning} & \textbf{Abbr.} & \textbf{Meaning} \\
\midrule
IC   & Independent Cascade
     & GIC  & Generalized Independent Cascade \\

MC   & Monte Carlo
     & IM   & Influence Maximization \\

DMP  & Dynamic Message Passing
     & SWE  & Step-Wise Estimation \\

NO   & Noisy-OR
     & ALE  & Aggregated Linear Effect \\

IPL  & Independent Propagation-Layer
     & NM   & Naive Method \\

SPM  & Shortest Path Model
     & MIA  & Maximum Influence Arborescence \\

SSS  & Steady State Spread
     & NBUB & Nonbacktracking Upper Bound \\

BP   & Belief Propagation
     & IRIE & Influence Ranking and Influence Estimation \\

TDA  & Tree Distribution Approximation
     & RMSE & Root Mean Squared Error \\

MAE  & Mean Absolute Error
     & AUC  & Area Under the Curve \\
\bottomrule
\end{tabularx}
\end{table}

\section{Introduction}
\label{sec:introduction}

Modeling network diffusion dynamics is an important application of network analysis, with a wide range of use cases. Broadly speaking, a diffusion process describes how some quantity of interest propagates between nodes of a network. In this work, we focus on the Independent Cascade (IC) model \cite{GOLDENBERG_IC}, due to its simplicity, flexibility, and widespread adoption. Its canonical application is the Influence Maximization (IM) problem \cite{kempe_im}, which asks for a small set of initially active nodes that maximizes the expected influence spread. However, IC and related network diffusion models arise in a much broader range of applications, including epidemic modeling \cite{KARRER_EPIDEMIC,ALTARELLI_EPIDEMIC,LOKHOV_EPIDEMIC_SOURCE}, diffusion-source inference \cite{ZHAI_SOURCE,RUMOR_SOURCE,BERENBRINK_IC_SOURCE}, opinion dynamics \cite{QUATTROCIOCCHI_OPINION}, diffusion-parameter learning and network inference \cite{SAITO_IC_LEARNING,GOYAL_INFLUENCE_LEARNING,SAITO_MODEL_SELECTION,NETINF}, recommender systems \cite{WU_SOCIAL_RECOMMENDATION}, banking \cite{BOTA_BANK_TRANSACTION}, and modeling the progression of neurodegenerative disease through brain networks \cite{RAJ_NETWORK_DIFFUSION}.

Despite its widespread use, exact evaluation of the IC model is computationally intractable; a reduction from \(st\)-connectedness shows that computing its influence function is \#P-hard \cite{mia_icphard}. Although exact computation methods exist \cite{bdd_influence}, they are
generally computationally practical only on restricted graph classes; for
example, bounded-pathwidth graphs admit linear-time exact computation
\cite{bounded_pathwidth}. In general, a standard approach is Monte Carlo (MC) simulation, in which the diffusion process is repeatedly sampled and activation probabilities or expected influence are estimated empirically. Experimental studies commonly use on the order of \(10^4\) simulations per seed set to obtain stable estimates \cite{kempe_im,irie,dmp,sss}, although substantially larger sample sizes may be required when high precision is needed \cite{burkholz2021cascade,nbub}. More sophisticated sampling schemes can substantially reduce the cost of
estimating the total influence of a fixed seed set \cite{infest,siea}.
However, such methods typically target the scalar expected spread rather
than the complete vector of node activation probabilities. This aggregation
can obscure important structural information: for example, two diffusion
instances may have the same expected spread even if one assigns moderate
activation probability to many nodes, while the other concentrates most of
the probability on a small number of highly exposed nodes. Such differences
are important whenever the identities and relative risks of individual nodes,
rather than only the total number expected to activate, are of interest. Moreover, scalability for a single influence query does not directly translate to repeated high-precision marginal-gain evaluations, as required by greedy IM algorithms, particularly when the marginal gain is small relative to the influence already accumulated by the current seed set.

For the sole purpose of solving the canonical IM problem, reverse-reachable-set and sketch-based methods are among the strongest scalable approaches \cite{skim,tim}. Their scalability arises partly from avoiding explicit reconstruction of the diffusion process: rather than estimating all node-level activation probabilities, they use sampled reachability structures to approximate the influence objective and optimize the seed set directly. Nevertheless, there is empirical evidence that methods modeling the underlying diffusion dynamics more explicitly can, in some settings, obtain seed sets with greater Monte-Carlo-evaluated influence \cite{im_wiht_gnn}. These approaches therefore do not eliminate the need for efficient diffusion estimation, particularly in applications where node-level activation probabilities, repeated evaluation under changing diffusion parameters, or differentiability of the estimator are themselves important.

For these reasons, we believe there is considerable value in studying analytical approximations of the IC process. Learning-based diffusion estimators have also shown promising results
\cite{im_wiht_gnn,nedmp,neural_mean_field}, but we do not include them
in the main comparison because they operate under a substantially different
computational setting in which performance depends not only on the specified
diffusion instance, but also on the availability and representativeness of
training data, the learning architecture, and generalization to unseen graph
and diffusion regimes. A direct comparison with sampling-free analytical methods would therefore require a separate experimental protocol. Determining when learned corrections provide sufficient accuracy gains to justify these additional requirements remains an interesting direction for future work.

Despite the large number of analytical IC approximations proposed in the literature, to the best of our knowledge no previous work has compared the major approaches head-to-head across a broad and controlled experimental environment. Existing works frequently evaluate approximations only as components of downstream tasks such as IM, making it difficult to isolate the quality of the diffusion estimator itself. Moreover, standard IM evaluations often employ relatively low edge probabilities, where diffusion is weak and differences between approximation methods can be less pronounced. A controlled comparison across graph topologies and diffusion regimes can therefore reveal behavior that is difficult to observe from downstream IM performance alone.

Our main contributions are:
\begin{itemize}
\item We provide a systematic taxonomy of analytical IC approximation methods in terms of the dependencies and feedback effects they neglect or explicitly model.
\item We develop a unified mathematical formulation that exposes relationships between several existing methods and motivates new approximations.
\item We propose several new analytical approximation methods and extensions, including higher-order cavity approximations. In our held-out evaluation, second-order DMP achieves the lowest estimation error on every tested dataset.
\item We conduct an extensive empirical comparison on synthetic and real-world networks across a broad range of graph structures, seed-set sizes, edge-probability models, and diffusion strengths.
\end{itemize}

To our knowledge, this work provides the first comprehensive study of analytical approximation methods for the Independent Cascade model. Rather than evaluating a single approximation or a narrowly related family, we systematically organize and compare methods spanning cumulative, step-wise, path-based, linear-propagation, cavity/message-passing, and higher-order formulations. The methods are evaluated under a common experimental protocol for node-level marginal activation probabilities across diverse real-world and synthetic networks, diffusion strengths, seed-set sizes, and edge-probability models.

The remainder of this paper is organized as follows. Section~\ref{sec:problem-formulation} introduces the required formal preliminaries. Section~\ref{sec:unified-view} describes representative approximation methods, discusses the principal sources of approximation error, and develops a unified mathematical view. Section~\ref{sec:new-methods} presents our new approximation methods, motivated by the unified formalism and the limitations identified in the preceding section. Section~\ref{sec:experimental-methodology} describes the experimental protocol and implementation details, and Section~\ref{sec:results} presents the resulting empirical analysis. Section~\ref{sec:discussion} discusses the broader implications and limitations of the methods, before Section~\ref{sec:conclusion} concludes the paper.

A preliminary version of aspects of this work was presented at the 6th International Symposium on Complex Systems (ISCS 2026)~\cite{pozar2026approximation}. The present work substantially extends the conference version with additional approximation methods, theoretical analysis, and a broader experimental evaluation.


\section{Problem Formulation}
\label{sec:problem-formulation}

In this section we introduce the formalism required to describe diffusion under the Independent Cascade (IC) model. We then define the node-level marginal activation probabilities and expected influence spread that form the main quantities of interest in this work. Finally, we discuss Monte Carlo estimation and briefly introduce the Influence Maximization (IM) problem, which provides an important application of efficient IC estimation.

\subsection{Networks and the Independent Cascade Model}
\label{subsec:ic-model}

We use standard graph-theoretic notation. Let \(G=(V,E)\) be a simple directed graph, where \(V\) is the set of nodes and \(E\subseteq V\times V\) is the set of directed edges. We use the terms \emph{node} and \emph{vertex}, as well as \emph{graph} and \emph{network}, interchangeably. Undirected networks are represented as bidirectional graphs, where an undirected edge \(\{u,v\}\) is represented by the two directed edges \((u,v)\) and \((v,u)\).

For a node \(u\in V\), we denote its in- and out-neighborhoods by

$$
N^{\mathrm{in}}(u)
=
\{v\in V:(v,u)\in E\},
\qquad
N^{\mathrm{out}}(u)
=
\{v\in V:(u,v)\in E\}.
$$

When the distinction is unnecessary, we write \(N(u)\). Analogously, for a set \(S\subseteq V\), we define

$$
N^{\mathrm{out}}(S)
=
\bigcup_{u\in S}N^{\mathrm{out}}(u)\setminus S.
$$

When it is useful to make the underlying graph explicit, we use an additional subscript, e.g., \(N_G^{\mathrm{out}}(u)\).

Under the IC model, every node is in one of two states: active or inactive. Active nodes remain active permanently. Each directed edge \((u,v)\in E\) is associated with an activation probability

$$
p_{uv}\in[0,1],
$$

representing the probability with which a newly activated node \(u\) successfully activates an inactive out-neighbor \(v\). We collect these probabilities in the weighted propagation matrix
\(B\in[0,1]^{|V|\times |V|}\), oriented according to the column-vector
convention, where
\[
B_{vu}
=
\begin{cases}
p_{uv}, & (u,v)\in E,\\
0, & (u,v)\notin E.
\end{cases}
\]
Consequently,
\[
(Bx)_v=\sum_{u:(u,v)\in E}p_{uv}x_u.
\]

The diffusion begins from an initially active set \(S\subseteq V\), called the \emph{seed set}. Let \(A(t)\) denote the set of nodes that become newly active at time \(t\), with

$$
A(0)=S.
$$

At every subsequent time step, each node \(u\in A(t)\) independently attempts to activate each currently inactive out-neighbor \(v\), succeeding with probability \(p_{uv}\). Each edge is therefore given at most one activation attempt during a realization of the process. The process terminates when no new node becomes active and, since every non-terminal step activates at least one previously inactive node, termination occurs after at most \(|V|-1\) propagation steps.

An equivalent and particularly useful formulation of the IC model is given by the \emph{live-edge representation}. Before the diffusion begins, we independently sample every edge \(e=(u,v)\in E\) according to

$$
X_e\sim\operatorname{Bernoulli}(p_e).
$$

An edge is called \emph{live} when \(X_e=1\). One particular realization of the IC process therefore induces a random unweighted subgraph

$$
G'=(V,E'),
\qquad
E'=\{e\in E:X_e=1\}.
$$

Given \(G'\), the diffusion itself becomes deterministic. In particular,

$$
A(t+1)
=
N^{\mathrm{out}}_{G'}(A(t))
\setminus
\bigcup_{\ell\leq t}A(\ell).
$$

Equivalently,

$$
A(t)
=
\{u\in V:d_{G'}(S,u)=t\},
$$

where

$$
d_{G'}(S,u)
=
\min_{s\in S}d_{G'}(s,u),
$$

and \(d_{G'}(s,u)\) denotes the length of the shortest directed path from \(s\) to \(u\) in \(G'\), with \(d_{G'}(s,u)=\infty\) when no such path exists. Thus, a node is eventually activated if and only if it is reachable from the seed set in the sampled live-edge graph.

The standard IC model assumes a deterministic seed set. More generally, initial activation itself can be stochastic. Let

$$
\boldsymbol{\pi}^{(0)}
=
\{\pi_u^{(0)}\}_{u\in V}
$$

denote initial activation probabilities, and include every node \(u\) in the seed set independently according to

$$
u\in S
\sim
\operatorname{Bernoulli}\!\left(\pi_u^{(0)}\right).
$$

This extension is commonly referred to as the Generalized Independent Cascade (GIC) model \cite{generalized_ic} and is useful in settings where an intervention does not activate a selected node deterministically. For ease of presentation, we primarily use deterministic seed sets throughout the paper; the extension of the considered approximation methods to probabilistic initial activations is generally straightforward.

\subsection{Marginal Activation Probabilities and Influence Spread}
\label{subsec:marginals}

The IC model induces a probability distribution over live-edge graphs and, consequently, over possible activation cascades. We are generally interested not in a particular realization, but in the expected outcome.

Let \(p_u(t)\) denote the marginal probability that node \(u\) has become active by time \(t\):

$$
p_u(t)
=
\Pr\left[
u\in\bigcup_{\ell\leq t}A(\ell)
\right].
$$

Using the live-edge formulation, this is equivalently

$$
p_u(t)
=
\Pr_{G'}
\left[
d_{G'}(S,u)\leq t
\right],
$$

or, in expectation form,

$$
p_u(t)
=
\mathbb{E}_{G'}
\left[
\mathbf{1}
\left\{
d_{G'}(S,u)\leq t
\right\}
\right],
$$

where the expectation is taken over independently sampled live-edge realizations \(G'\).

For some of the approximation methods considered later, it is useful to introduce \emph{cavity marginals}. For a set

$$
R\subseteq V\setminus(S\cup\{u\}),
$$

we define

$$
p_{u\rightarrow R}(t)
=
\Pr_{G'}
\left[
d_{G'\setminus R}(S,u)\leq t
\right].
$$

Thus, \(p_{u\rightarrow R}(t)\) is the activation probability of \(u\) in the \emph{cavity graph} obtained by removing the nodes in \(R\). We use the shorthand

$$
p_{u\rightarrow v}(t)
=
p_{u\rightarrow\{v\}}(t),
$$

and analogously for cavity sets of small cardinality. Importantly, this quantity is defined through node removal rather than by conditioning on the event that the nodes in \(R\) happen to remain inactive in the original stochastic process. 

Some models additionally track probabilities of nodes becoming activated at particular steps, rather than cumulating the overall influence over several steps. We define
\[
r_u(t) = \Pr
\left[
d_{G'}(S,u)= t
\right],
\]
i.e. the probability that the shortest length path from any node in the seed set $S$ to node \(u\) is exactly \(t\) in the live edge graph $G'$.

We denote the final activation probability of node \(u\) by

$$
p_u^\infty
=
\Pr_{G'}
\left[
d_{G'}(S,u)<\infty
\right].
$$

Since the IC process terminates after finitely many steps,

$$
p_u^\infty
=
p_u(|V|-1).
$$

A commonly used aggregate quantity is the expected influence spread of a seed set \(S\), defined as the expected number of nodes eventually activated:

$$
\sigma(S)
=
\mathbb{E}
\left[
\left|
\bigcup_{t\geq0}A(t)
\right|
\right].
$$

By linearity of expectation,

$$
\boxed{
\sigma(S)
=
\sum_{u\in V}p_u^\infty.
}
$$

The influence function therefore aggregates the full vector of node-level activation probabilities

$$
\mathbf{p}^{\infty}
=
\left(p_u^\infty\right)_{u\in V}
$$

into a single scalar. While this quantity is sufficient for some downstream tasks, most prominently standard IM, it discards the fine-grained information contained in the individual activation probabilities.

\subsection{Monte Carlo Estimation and Computational Complexity}
\label{subsec:mc}

Exact computation of IC marginals is computationally difficult because exponentially many live-edge realizations can contribute to the activation probability of a node. Indeed, exact IC evaluation is \#P-hard \cite{mia_icphard}. A direct reduction follows from the \(s\)-\(t\) connectedness problem studied by Valiant \cite{VALIANT_RELIABILITY}, which asks for the number of subgraphs of a graph containing an \(s\)-\(t\) path.

Consider an instance of \(s\)-\(t\) connectedness and construct an IC instance on the same graph by setting \(S=\{s\}\) and assigning \(p_e=0.5\) for every \(e\in E\). The graph in question can be considered to be directed or undirected.

Every subgraph of \(G\) is then sampled as a live-edge realization with equal probability \(2^{-|E|}\). The final activation probability of \(t\) is therefore

$$
p_t^\infty
=
\frac{
\#\{E'\subseteq E:(V,E')\text{ contains an }s\text{-}t\text{ path}\}
}{
2^{|E|}
}.
$$

Consequently, an exact computation of \(p_t^\infty\) would yield the solution to the \(s\)-\(t\) connectedness counting problem.

The standard asymptotically exact approach to estimating IC marginals on general networks is Monte Carlo (MC) simulation. Let

$$
\mathcal{G}
=
\{G'_1,\ldots,G'_M\}
$$

be \(M\) independently sampled live-edge realizations. The marginal probability of node \(u\) is estimated as

$$
\hat p_u(t)
=
\frac{1}{M}
\sum_{i=1}^{M}
\mathbf{1}
\left\{
d_{G'_i}(S,u)\leq t
\right\}.
$$

Since every term is a Bernoulli random variable with parameter \(p_u(t)\),

$$
\mathbb{E}[\hat p_u(t)]
=
p_u(t),
$$

and

$$
\operatorname{Var}[\hat p_u(t)]
=
\frac{
p_u(t)(1-p_u(t))
}{M}
\leq
\frac{1}{4M}.
$$

Hence,

$$
\operatorname{SE}[\hat p_u(t)]
\leq
\frac{1}{2\sqrt{M}}.
$$

For example, with \(M=10^4\) independent realizations, the worst-case standard error of an individual node marginal is at most \(0.005\). Increasing the precision further requires a quadratic increase in the number of samples with respect to the desired reduction in standard error.

The influence spread can analogously be estimated as

$$
\hat\sigma(S)
=
\frac{1}{M}
\sum_{i=1}^{M}
\left|
\{u\in V:d_{G'_i}(S,u)<\infty\}
\right|.
$$

Sampling budgets on the order of \(10^4\) realizations are commonly used in empirical IC studies \cite{kempe_im,irie,dmp,sss}, although substantially larger sample sizes may be required when high precision is needed \cite{burkholz2021cascade,nbub}. In particular, unlike a single node indicator, the cascade size takes values between \(0\) and \(|V|\), and its variance can grow with the size and structure of the network \cite{nbub}.

The computational cost of MC estimation scales linearly with the number of sampled realizations. In the worst case, a realization may require exploring a substantial fraction of the graph, yielding a cost of

$$
O\!\left(
M(|V|+|E|)
\right)
$$

for \(M\) complete simulations. Individual simulations are completely independent and hence highly parallelizable, but every realization still requires a stochastic reachability computation. Consequently, MC estimation can become computationally expensive when high precision is required or when influence must be evaluated repeatedly, as is the case in many IM algorithms.

\subsection{Influence Maximization Context}
\label{subsec:im-context}

The Influence Maximization problem is a classical combinatorial optimization problem introduced in the algorithmic diffusion literature by Kempe et al. \cite{kempe_im}. Given a graph \(G\), an IC model, and a seed budget \(k\), the goal is to select a seed set

$$
S^*
=
\arg\max_{\substack{S\subseteq V\\|S|\leq k}}
\sigma(S).
$$

The problem is NP-hard under the IC model \cite{kempe_im}, as can be shown by a reduction from Set Cover.

Kempe et al. further showed that the IC influence function is monotone and submodular \cite{kempe_im}. Monotonicity means that

$$
S\subseteq T
\quad\Longrightarrow\quad
\sigma(S)\leq\sigma(T),
$$

while submodularity corresponds to diminishing marginal returns:

$$
\sigma(S\cup\{v\})-\sigma(S)
\geq
\sigma(T\cup\{v\})-\sigma(T),
\qquad
S\subseteq T,
\quad
v\notin T.
$$

Consequently, when exact influence values are available, the classical greedy algorithm achieves the approximation guarantee

$$
\sigma(S_{\mathrm{greedy}})
\geq
\left(1-\frac{1}{e}\right)
\sigma(S^*).
$$

The greedy method is given in Algorithm~\ref{alg:greedy-im}.

\begin{algorithm}[H]
\caption{Greedy Influence Maximization}
\label{alg:greedy-im}
\begin{algorithmic}[1]
\Require Graph $G=(V,E)$, seed budget $k \le |V|$, influence function $\sigma$
\Ensure Seed set $S$ with $|S|=k$

\State $S\gets\emptyset$

\For{$i=1,\ldots,k$}
\State $v^\star\gets
\arg\max\limits_{v\in V\setminus S}
\left[
\sigma(S\cup{v})-\sigma(S)
\right]$
\State $S\gets S\cup{v^\star}$
\EndFor

\State \Return $S$
\end{algorithmic}
\end{algorithm}

A direct implementation requires evaluating the influence of every remaining candidate at every iteration, resulting in

$$
\sum_{i=0}^{k-1}(|V|-i)
=
O(k|V|)
$$

candidate influence evaluations. Since exact computation of \(\sigma\) is \#P-hard, practical implementations replace it with an estimator \(\hat\sigma\). Lazy-forward strategies such as CELF \cite{CELF} can substantially reduce the number of repeated evaluations in practice, but influence estimation nevertheless remains a central computational primitive in greedy IM methods. Moreover, when candidate marginal gains are small relative to the already accumulated influence, distinguishing between candidates may require particularly accurate influence estimates.

The goal of this work is therefore to study efficient analytical approximations of the node-level marginal activation probabilities induced by the IC process. These marginals provide a richer description of the diffusion than the aggregate influence alone, while the influence function follows immediately as

$$
\sigma(S)
=
\sum_{u\in V}p_u^\infty.
$$

Our primary objective is to obtain such estimates at a computational cost substantially below that of high-precision Monte Carlo simulation.


\section{A Unified View of Independent Cascade Approximation}
\label{sec:unified-view}

Throughout Sections~\ref{sec:unified-view} and \ref{sec:new-methods}, $N(u)$ denotes the in-neighborhood $N^{\mathrm{in}}(u)$; for an undirected graph it reduces to the ordinary neighborhood.

\subsection{Exact Marginal Formulation and Main Challenges}
\label{subsec:exact-marginal}

The marginal probabilities can be expressed exactly as
\begin{equation} \label{eq:exact_marginals}
p_u(t)=p_u+(1-p_u)\sum_{R\subseteq N(u)}p_{R\rightarrow u}(t-1)\left [1-\prod_{v\in R}(1-p_{vu})\right ],
\end{equation}

where the quantity \(p_{R\rightarrow u}(t-1)\) denotes the probability that \emph{exactly} the neighbors of \(u\) in \(R\) are activated by time \(t-1\) in the cavity graph \(G\setminus \{u\}\) where node \(u\) is removed.
Note that we use $p_u:=p_u(0)=\pi_u^{(0)}=p^{(0)}$ interchangeably for denoting the initial activation probabilities at time $t=0$.
Beyond the fact that all possible neighbor states are needed for the marginal \(p_u(t)\), the state probabilities \(p_{R\rightarrow u}(t-1)\) have no canonical form, since their activations can be correlated in arbitrary ways. In particular, whenever two nodes share at least one node in their sets of predecessors, their states can become dependent.

Secondly, the removal of contribution of node \(u\) in Equation~\ref{eq:exact_marginals} is necessary, since if any one of the neighbors \(N(u)\) activates \(u\), then \(u\) itself must not have contributed to the activation of that neighbor in any way, otherwise we get undesired feedback loops, whereby a node could influence itself. More precisely, any live path witnessing activation of $u$ can be taken to be simple; its prefix before the final edge into $u$ therefore does not use $u$. If this is not considered the computed probabilities would artificially be larger.

We note that the two sources of difficulties are in some sense orthogonal, and each contributes to the general computational difficulties of the marginals. First, consider instances of the IC in directed acyclic graphs. Clearly feedback loops play no role in such instances, furthermore one can remove the cavity messages in Equation~\ref{eq:exact_marginals}. However, the \(s\)-\(t\) connectedness problem is \#P-hard even for directed acyclic planar graphs of maximum vertex degree three \cite{PROVAN_DAG_RELIABILITY}.

Secondly, under the assumption that neighbor activations are independent, Equation~\ref{eq:exact_marginals} reduces to

\[
\hat p_u(t) = 1-(1-p_u)\prod_{v\in N(u)}\left [1-p_{vu} \cdot \hat p_{v\rightarrow u}(t-1) \right ]
\]
while the general cavity messages are given by
\begin{equation}\label{eq:independent_general_cavity}
    \hat p_{u\rightarrow S}(t)=1-(1-p_u)\prod_{v\in N(u)\setminus S}\left [ 1-p_{vu}\cdot \hat p_{v\rightarrow S\cup \{u\}}(t-1) \right ].
\end{equation}
The interpretation of Equation~\ref{eq:independent_general_cavity} is that node \(u\) can get activated by time \(t\) in the graph where nodes in \(S\) are removed, if it is initially activated, or if any one of the neighbors of \(u\) that are not in \(S\) activates \(u\) at a time before \(t\), crucially without the help of any of the nodes in \(S\cup \{u\}\). The states of the cavity messages are indexed by entire subsets of nodes, hence in the worst case there can be \(n2^{n-1}\) such cavity states. With memoization, the most obvious evaluation therefore has exponential state complexity, roughly \(O(\text{poly}(n)2^n)\).


%
%

\subsection{Existing Solutions}
\label{subsec:existing-solutions}

In general there are no computationally inexpensive exact solutions to the two sources of difficulty; state correlations and feedback loops, hence they are often ignored. If both properties are fully relaxed, one obtains the following simplification of Equation~\ref{eq:exact_marginals}
\begin{equation} \label{eq:Naive-Method}
\hat p_u(t) = 1-(1-p_u)\prod_{v\in N(u)}\left [1-p_{vu}\cdot \hat p_v(t-1)\right ].
\end{equation}
We shall refer to Equation~\ref{eq:Naive-Method} as the \textbf{Naive Method} (NM). Many existing IC approximation solutions explicitly rely on the NM. 

For example, SPM and SP1M \cite{SPM} models essentially use the same update rule as the NM and consider only paths of shortest length (or additionally paths of shortest length plus one), namely
\[
\operatorname{SPM}(u) = \operatorname{NM}_u(d(S,u))
\]
and
\[
\operatorname{SP1M}(u) = \operatorname{SPM}(u) + (1-\operatorname{SPM}(u))\cdot \operatorname{NM}_u(d(S,u)+1).
\]
This helps mitigate some of the issues discussed thus far but at the same time explicitly neglects much of the dynamics, which can lead to poor performance especially in larger edge probability settings. 

The neighborhood bound heuristic \cite{ale} uses a very similar update rule but restrict the influence propagation only to a small neighborhood of each node. 

The maximum influence
arborescence \cite{mia_icphard} (MIA) approach uses the same update rule as the NM, however only after restricting to an arborescence, not on the original graph. This is done by retaining only the most influential paths for each node. For each target node \(v\), they find the maximum influence path
\[
\operatorname{MIP}(u,v) = \arg\max_{P\in\mathcal P(u,v)}\prod_{e\in P}p_e
\]
from every possible source \(u\) to \(v\). Here $\mathcal P(u,v)$ is the set of simple directed $u$-to-$v$ paths. They discard paths whose probability is below a threshold \(\theta\), then take the union of the surviving maximum influence paths into \(v\). This gives the maximum influence in-arborescence \(\operatorname{MIIA}(v,\theta)\). The NM is then computed for each node \(v\) on \(\operatorname{MIIA}(v,\theta)\).

Several approaches for approximating the IC are introduced in \cite{sss}, all strongly related to the NM. The Steady State Spread (SSS) iterates Equation~\ref{eq:Naive-Method} until a convergence criterion is met
\[
\operatorname{SSS}(u)=\lim_{t\rightarrow \infty}\operatorname{NM}_u(t)
\]
Due to the issues imposed by feedback loops, the method SSS-Noself is introduced, where cavity-like quantities are introduced by computing influence spread in auxiliary graphs where certain edge probabilities are set to zero, note that this is done for each edge so the overhead is significant.
They define \(\pi_u^{[v]}\) as the probability that node \(u\) gets activated in the cavity graph where node \(v\) is removed. They compute this by the SSS estimation for node \(u\) in the IC instance where all edges incident to \(v\) have probability zero.
Finally, SSS-Bounded-Path is introduced where the number of time steps of NM is limited to some finite value instead of iterating until a steady state is reached. 

The nonbacktracking upper bound (NBUB) from \cite{nbub} can be interpreted as replacing the node-level propagation underlying NM by edge-specific nonbacktracking messages. 

Dynamic Message Passing \cite{dmp,dmp_unidirectional} (DMP) directly addresses the difficulty of feedback loops, while it explicitly keeps the neighbor activations independent in their modeling. The key idea is quite simple, namely to apply the well known Belief Propagation \cite{belief_propagation} (BP) framework to the setting of the IC. DMP applies the dynamic-cavity/BP principle to IC; algebraically, it replaces the node-level noisy-OR propagation of NM by edge-cavity noisy-OR messages. They estimate the cavity quantities for each edge as
\[
\hat p_{u\rightarrow v}(t)=1-(1-p_u)\prod_{w\in N(u)\setminus \{v\}}[1-p_{wu}\cdot \hat p_{w\rightarrow u}(t-1)].
\]
The crucial modeling assumption made is 
\[
p_{w\rightarrow u,v}(t-1)\approx p_{w\rightarrow u}(t-1).
\]
Then the final marginals are given as 
\[
\hat p_u(t)=1-(1-p_u)\prod_{v\in N(u)}[1-p_{vu}\cdot \hat p_{v\rightarrow u}(t-1)].
\]
DMP is exact when the underlying undirected graph is a tree (or forest), with independent edge variables and independent/deterministic seeds. Somewhat surprisingly, DMP is rarely used in the IM literature, as we shall see in upcoming sections, this in empirically not warranted. It is worth noting that the belief-propagation marginals used by \cite{burkholz2019,burkholz2021cascade} reduce, for the IC model, to the same cavity fixed-point equations subsequently used by \cite{dmp} for infinite-time IC influence estimation. Additionally, \cite{irie} uses the same kind of belief propagation ideas, however for the purpose of estimating the influence spread function rather than directly estimating marginals.

While less common, some methods explicitly approximate the probability that
a node becomes active at a particular time step and subsequently aggregate
these quantities. Let \(r_u(t)\) denote the probability that node \(u\)
becomes newly active at time \(t\). Two events must then co-occur:
\(u\) must remain inactive through time \(t-1\), and at least one neighbor
that became newly active at time \(t-1\) must successfully activate \(u\)
at time \(t\).

Let \(S_u(t-1)\) denote the event that \(u\) is inactive through time
\(t-1\), and let \(F_u(R,t-1)\) denote the event that exactly the nodes
in \(R\subseteq N(u)\) became newly active at time \(t-1\). Then
\begin{equation}
\label{eq:exact_swe_marginals}
r_u(t)
=
(1-p_u(t-1))
\sum_{R\subseteq N(u)}
\Pr\!\left[
F_u(R,t-1)\mid S_u(t-1)
\right]
\left[
1-\prod_{v\in R}(1-p_{vu})
\right].
\end{equation}

Unlike the cavity formulation of Equation~\ref{eq:exact_marginals}, the
dependency on \(u\) is represented here through conditioning on its
survival rather than by removing \(u\) from the graph. The computational
difficulty is therefore contained in the conditional joint distribution
of newly active neighbors as well as in the incorporation of conditional information. In contrast to cavity information, the conditioning in Equation~\ref{eq:exact_swe_marginals} does not admit an analogous canonical local recursion. A cavity condition can be represented structurally by removing a set of nodes, leading naturally to the hierarchy of messages \(p_{u\rightarrow S}\) in Equation~\ref{eq:independent_general_cavity}. The event \(S_u(t-1)\), however, is an observation that \(u\) has remained inactive despite all possible activation mechanisms up to time \(t-1\). This event can arise from many different combinations of inactive predecessors and failed transmissions, and conditioning on it consequently reweights the activation histories of other nodes in a topology- and time-dependent manner. Exact incorporation of this information therefore generally requires joint state or trajectory probabilities rather than a simple enlargement of a cavity set.

The canonical step-wise estimator is obtained by replacing this
conditional joint distribution with a product of unconditional
step-wise marginals,
\[
\Pr[F_u(R,t-1)\mid S_u(t-1)]
\approx
\prod_{v\in R}\hat r_v(t-1)
\prod_{v\in N(u)\setminus R}
(1-\hat r_v(t-1)).
\]
Substituting this approximation into
Equation~\ref{eq:exact_swe_marginals} gives
\[
\hat r_u(t)
=
(1-\hat p_u(t-1))
\left[
1-
\prod_{v\in N(u)}
(1-p_{vu}\hat r_v(t-1))
\right],
\]
with cumulative marginals
\begin{equation}
\label{eq:swe}
\hat p_u(t)=\sum_{i\leq t}\hat r_u(i).
\end{equation}
We refer to this approximation as \textbf{Step-Wise Estimation} (SWE).

An equivalent update arises from the unified influence framework of
\cite{unified_influence_model} when specialized to the Independent
Cascade model. Closely related equations were derived by
\cite{zhang_probabilistic_2013} using an inclusion--exclusion formulation.

A particularly useful interpretation is provided by
\cite{chen_taylor_2017}. They formulate the exact IC transition in terms
of the conditional neighborhood distribution
\[
P(X_{N(u)}(t)=x\mid X_u(t)=0).
\]
Their ``Independent Model'' approximates this distribution by
\[
P(X_{N(u)}(t)=x\mid X_u(t)=0)
\approx
\prod_{v\in N(u)}P(X_v(t)=x_v),
\]
which yields the same SWE recurrence above. They further introduce a
``Markov Model'', which retains the conditioning on the target node while
factorizing only the neighboring states,
\[
P(X_{N(u)}(t)=x\mid X_u(t)=0)
\approx
\prod_{v\in N(u)}
P(X_v(t)=x_v\mid X_u(t)=0).
\]
Thus, this method can be interpreted as an intermediate approximation
between the exact conditional formulation in
Equation~\ref{eq:exact_swe_marginals} and SWE: neighbor dependencies are
still factorized, but the dependence between a neighbor's activation and
the survival of \(u\) is retained.

The step-wise approximation in \cite{yang2012approximation} is closely
related but omits the survival term. Their nonlinear recurrence is
\[
\hat r_u(t)
=
1-\prod_{v\in N(u)}
(1-p_{vu}\hat r_v(t-1)),
\]
which corresponds to SWE without the correction for previous activation.
Under the additional assumption of small propagation probabilities they
linearize the product, obtaining
\[
\hat r_u(t)
\approx
\sum_{v\in N(u)}p_{vu}\hat r_v(t-1),
\]
which forms the basis of their GS approximation.

The earlier SPM and SP1M models \cite{SPM} can also be viewed from this
step-wise perspective. In SPM, activation is considered only at the
earliest possible time \(d(A,u)\), where the events ``active by time
\(d(A,u)\)'' and ``newly active at time \(d(A,u)\)'' coincide. Therefore
the noisy-OR update can be interpreted either cumulatively or step-wise.
SP1M additionally considers time \(d(A,u)+1\), where this equivalence no
longer holds, and consequently introduces an explicit survival correction
for the second activation opportunity.

The Aggregated Linear Effect model \cite{ale} (ALE) provides an especially simple linear approximation of the diffusion process. Let \(B\) denote the weighted adjacency matrix oriented such that \((Bx)_u=\sum_v p_{vu}x_v\). ALE estimates the cumulative activation vector after \(t\) propagation steps as

$$ \hat p(t)=\sum_{\tau=0}^{t}B^\tau p(0), $$

where \(B^0=I\). Each term \(B^\tau p(0)\) aggregates the contributions of weighted walks of length exactly \(\tau\). When \(\rho(B)<1\), the infinite series converges and

$$ \hat p(\infty)=(I-B)^{-1}p(0). $$

ALE is particularly attractive computationally because propagation is entirely linear. As a consequence of this linearization, ALE does not account for overlap between activation paths or repeated propagation through cycles, and its estimates are not intrinsically bounded by one.

\subsection{A Unified Mathematical View}
\label{subsec:unified-mathematical-view}

Although the approximation methods discussed above were introduced from
substantially different perspectives, many of them can be organized around
three modeling choices: (i) the temporal state that is propagated, (ii) the
amount of propagation history or context that is retained, and (iii) the rule
used to aggregate incoming activation attempts. This view exposes several
otherwise less apparent relationships between existing methods.

\subsubsection{A common propagation operator.}
Let \(G=(V,E)\) be a directed graph and let
\(\mathbf p^{(0)}\in[0,1]^{|V|}\) denote the vector of initial activation
probabilities. For each edge \(e=(u,v)\), let \(p_e=p_{uv}\). Define the
source-incidence matrix
\[
S\in\{0,1\}^{|E|\times |V|},
\qquad
S_{e,u}=1 \iff u=\operatorname{src}(e),
\]
the destination-incidence matrix
\[
D\in\{0,1\}^{|V|\times |E|},
\qquad
D_{v,e}=1 \iff v=\operatorname{dst}(e),
\]
and
\[
W=\operatorname{diag}(p_e).
\]
The weighted node-space propagation matrix is then
\[
B=DWS,
\]
such that
\[
(Bx)_v
=
\sum_{u\rightarrow v}p_{uv}x_u.
\]

The factorization \(B=DWS\) is introduced to expose the individual
edge-level contributions underlying the weighted adjacency operator.
While \(Bx\) directly sums all incoming contributions at each node,
nonlinear aggregation rules such as noisy-OR must be applied to each
incoming edge contribution before these contributions are combined.
The source-incidence matrix \(S\) therefore lifts a node-level state
\(x\) to edge space, \(W\) applies the corresponding edge probabilities,
and \(D\) subsequently aggregates edge-level quantities at their
destination nodes. In particular,
\[
B=DWS,
\]
so this representation leaves the usual linear propagation operator
unchanged while permitting nonlinear operations to be inserted before
the final destination-wise aggregation.

For a node $v$, suppose that $x_u$ denotes the activation quantity
propagated by an in-neighbor $u$. The corresponding incoming activation
attempt through edge $(u,v)$ has weight $p_{uv}x_u$. Under the usual
independence approximation, the probability that none of the incoming
neighbors activates $v$ is therefore
\[
\prod_{u\to v}(1-p_{uv}x_u),
\]
so that the noisy-OR aggregation is
\[
1-\prod_{u\to v}(1-p_{uv}x_u).
\]
This operation can be written simultaneously for all nodes using the
incidence matrices defined above. The vector $WSx$ contains the
edge-level quantities $p_{uv}x_u$, applying $\log$ converts the product
over incoming edges into a sum, and multiplication by $D$ performs this
sum separately for each destination node. Exponentiating then restores
the product. Hence we define
\begin{equation}
\label{eq:noisy_or_operator}
\mathcal G(x)
=
\mathbf 1-
\exp\!\left[
D\log(\mathbf 1-WSx)
\right],
\end{equation}
where the logarithm and exponential are applied componentwise. We take the convention $\log (0)=-\infty,~\exp(-\infty)=0$. Equivalently,
\[
\mathcal G(x)_v
=
1-\exp \left ( \sum_{u\to v} \log (1-p_{uv}x_u) \right )
=
1-\prod_{u\rightarrow v}(1-p_{uv}x_u).
\]
Thus, \(\mathcal G\) combines the incoming activation probabilities by the
usual noisy-OR rule. The differences between several existing approximations
can therefore be understood not through a different local propagation rule,
but through the state \(x\) to which this rule is applied.

\subsubsection{Cumulative and step-wise propagation.}
The Naive Method propagates cumulative activation probabilities. Writing
\(x_t\) for the estimated probability of being active by time \(t\),
\begin{equation}
\label{eq:nm_operator}
x_{t+1}
=
\mathbf p^{(0)}
+
(\mathbf 1-\mathbf p^{(0)})
\odot
\mathcal G(x_t).
\end{equation}

Step-wise estimation instead separates the probability of becoming newly
active from the cumulative probability. Let \(r_t\) denote the vector of
new-activation probabilities at exactly time \(t\), and let
\[
c_t=\sum_{\tau=0}^{t}r_\tau,
\qquad
r_0=c_0=\mathbf p^{(0)}.
\]
The canonical SWE update is
\begin{equation}
\label{eq:swe_operator}
r_t
=
(\mathbf 1-c_{t-1})
\odot
\mathcal G(r_{t-1}),
\qquad
c_t=c_{t-1}+r_t.
\end{equation}
The two formulations therefore use the same nonlinear incoming-attempt
operator, but propagate different temporal states. NM repeatedly propagates
a cumulative activation state, whereas SWE propagates only newly active
probability and explicitly accounts for depletion of nodes that have already
activated.

This distinction also clarifies the relation to several earlier methods.
SPM and SP1M \cite{SPM} restrict the admissible activation times to the
shortest-path layer, or to the shortest-path layer and one additional layer,
respectively. At the earliest possible activation time \(d(S,u)\), cumulative
and newly-active probabilities coincide, while the additional layer in SP1M
requires a correction for previous activation. The nonlinear step-wise
approximation of \cite{yang2012approximation} also propagates newly-active
quantities through the noisy-OR operator, but omits the depletion factor in
Equation~\ref{eq:swe_operator}; their GS approximation subsequently
linearizes this nonlinear update. The IC specialization of the unified model
in \cite{unified_influence_model} instead yields the complete SWE update,
including the survival term.

The distinction between cumulative and step-wise propagation is not merely
notational. As discussed in Section~\ref{subsec:exact-marginal}, exact
step-wise inference involves the conditional joint neighborhood law
\[
\Pr\!\left[F_u(R,t-1)\mid S_u(t-1)\right],
\]
where \(S_u(t-1)\) denotes survival of \(u\) through time \(t-1\).
The Independent Model of \cite{chen_taylor_2017} replaces this conditional
joint distribution by unconditional single-node marginals, resulting in the
same SWE-type recurrence. Their Markov approximation retains the conditioning
on the target node while factorizing the neighboring states, and can therefore
be viewed as an intermediate closure between the exact conditional
formulation and SWE.

\subsubsection{Linearization and the common core of NM and SWE.}
The local noisy-OR operator admits the expansion
\begin{equation}
\label{eq:noisy_or_linearization}
\mathcal G(x)_v
=
\sum_{u\rightarrow v}p_{uv}x_u
-
\sum_{\substack{u,w\rightarrow v\\u<w}}
p_{uv}p_{wv}x_ux_w
+\cdots,
\end{equation}
and consequently
\begin{equation}
\label{eq:G_linearization}
\mathcal G(x)=Bx+O(\|WSx\|^2).
\end{equation}
The matrix \(B\) is therefore the first-order propagation operator common to
both the cumulative and step-wise formulations.

This relationship can be made formal by introducing a low-activation scaling
\[
\mathbf p^{(0)}=\varepsilon s,
\qquad
\varepsilon\rightarrow0,
\]
and considering a fixed propagation horizon \(T\). For NM,
Equation~\ref{eq:nm_operator} gives
\[
x_{t+1}
=
\varepsilon s+Bx_t+O(\varepsilon^2),
\]
and hence
\begin{equation}
\label{eq:nm_to_ale}
x_T
=
\left(
I+B+B^2+\cdots+B^T
\right)\mathbf p^{(0)}
+
O(\varepsilon^2).
\end{equation}

The same first-order expression follows from SWE. Since
\(r_t,c_t=O(\varepsilon)\),
\[
r_t
=
(\mathbf 1-c_{t-1})
\odot
\left(Br_{t-1}+O(\varepsilon^2)\right)
=
Br_{t-1}+O(\varepsilon^2).
\]
Therefore
\[
r_t
=
B^t\mathbf p^{(0)}
+
O(\varepsilon^2),
\]
and summing the step-wise contributions yields
\begin{equation}
\label{eq:swe_to_ale}
c_T
=
\left(
I+B+B^2+\cdots+B^T
\right)\mathbf p^{(0)}
+
O(\varepsilon^2).
\end{equation}

Equations~\ref{eq:nm_to_ale} and \ref{eq:swe_to_ale} reveal a common
first-order core:
\begin{equation}
\label{eq:ale_common_core}
\boxed{
\hat p^{\mathrm{ALE}}(T)
=
\sum_{\tau=0}^{T}B^\tau\mathbf p^{(0)}.
}
\end{equation}
Thus, ALE \cite{ale} can be interpreted as the linearized low-activation
limit of either cumulative NM propagation or step-wise SWE propagation.
The two nonlinear formulations differ substantially away from this limit,
but those differences enter only through higher-order terms: overlap between
multiple incoming activation possibilities and, in SWE, depletion of nodes
that have already activated. This is why low activation regimes are not the 
best when studying the differences between approximation methods.

The walk interpretation of Equation~\ref{eq:ale_common_core} is also useful.
The quantity
\[
B^\tau\mathbf p^{(0)}
\]
sums the weights of directed walks of length exactly \(\tau\) originating
from the initial activation vector. ALE therefore replaces the probabilistic
union of overlapping activation routes by a linear weighted-walk expansion.
When the spectral radius of \(B\) is strictly below one, i.e. \(\rho(B)<1\), the corresponding infinite-horizon expression is
\[
\hat p^{\mathrm{ALE}}(\infty)
=
(I-B)^{-1}\mathbf p^{(0)}.
\]

\subsubsection{History and cavity context.}
The second major modeling choice is how much information about previous
propagation is retained. Node-level methods such as NM and SWE retain no
explicit information about the route by which activation probability reached
a node. DMP \cite{dmp,dmp_unidirectional} instead propagates edge-cavity
messages
\[
q_{u\rightarrow v}(t),
\]
thereby excluding the immediate contribution of \(v\) when computing the
activation of \(u\). Its local aggregation remains the same noisy-OR rule,
but it is applied in edge-cavity rather than node space.

More generally, the cavity hierarchy
\[
p_{u\rightarrow S}(t)
\]
retains an excluded set \(S\) and gives a systematic notion of increasing
history depth. First-order DMP retains only the immediately excluded node,
while higher-order cavity approximations retain additional local history.
This provides a natural hierarchy for controlling feedback that is absent
from ordinary node-level propagation.

In the linear regime, the same distinction has a particularly simple
interpretation. Let \(R\in\{0,1\}^{|E|\times |E|}\) denote the edge-reversal matrix, defined by
\[
R_{e,f}
=
\begin{cases}
1, & e=(u,v)\text{ and }f=(v,u),\\
0, & \text{otherwise}.
\end{cases}
\]
Thus, \(R_{e,f}=1\) precisely when \(f\) is the directed reverse of \(e\), and \(R_{e,f}=0\) otherwise.
Define
\[
C=SD-R,
\qquad
K=CW.
\]
Where powers of \(B\) enumerate weighted walks, powers of \(K\) enumerate
weighted nonbacktracking walks. For two directed edges \(f=(a,b)\) and \(e=(b,c)\), we have
\(C_{e,f}=1\) precisely when \(e\) can follow \(f\) without immediately
reversing it. Consequently,
\[
    K_{e,f}=C_{e,f}p_f.
\]
Thus, under the column-vector convention, \(K\) transfers edge-state
mass from \(f\) to each admissible successor \(e\), weighting the
transition by the probability of \(f\). In particular,
\[
    DWK^{L-1}Sx
\]
aggregates the contributions of all directed nonbacktracking walks of
length \(L\), assigning the walk
\[
    (e_0,\ldots,e_{L-1})
\]
the weight
\[
    x_{\operatorname{src}(e_0)}
    \prod_{j=0}^{L-1} p_{e_j}.
\] The linear cavity analogue of ALE therefore
replaces unrestricted walk propagation by nonbacktracking walk propagation.
This relationship is useful conceptually even when the corresponding
intermediate approximation is not attractive as a standalone algorithm.

Several additional hybrid methods arise naturally by modifying only one of
the choices above. For example, one may linearize the incoming aggregation
in SWE while retaining its depletion term, or linearize cavity propagation
while retaining edge-specific history. We use such methods primarily as
diagnostic ablations of the framework rather than as separate conceptual
families. As shown in Section~\ref{sec:results}, these intermediate variants
did not provide sufficiently favorable accuracy--runtime trade-offs in our
experiments to justify treating them as primary approximation methods.

\subsubsection{Structural restriction as an orthogonal strategy.}
Not all approximations operate by changing the propagated probability state.
A second class instead simplifies the set of admissible propagation routes.
MIA \cite{mia_icphard}, for example, restricts propagation to a
target-specific maximum influence in-arborescence, on which the noisy-OR
recursion can be evaluated efficiently. SPM and SP1M restrict admissible
path lengths, while neighborhood-based approximations such as
\cite{ale} restrict the explored local region. Such methods can be combined
conceptually with the three axes above: they first simplify the path space
and then select a temporal state, a history representation, and an incoming
aggregation rule on the resulting surrogate structure.

IRIE \cite{irie} occupies a somewhat different position. Its influence
estimation begins from cavity-like influence relations but collapses them to
node-level scores and introduces a damping parameter. Its resulting recurrence
is linear,
\[
r
=
\mathbf 1+\alpha Pr,
\]
so that
\[
r
=
(I-\alpha P)^{-1}\mathbf 1
=
\sum_{k\geq0}\alpha^kP^k\mathbf 1.
\]
IRIE can therefore also be interpreted as a damped weighted-walk
approximation, although its target quantity is influence score rather than
the node-level marginal vector.

\subsubsection{Taxonomy}
Table~\ref{tab:approximation_taxonomy} summarizes the resulting view.
The three principal axes are the temporal state being propagated, the amount
of history or contextual information retained, and the rule used to combine
incoming activation contributions. Path and topology restrictions are shown
within the history/context column because they modify the propagation space
rather than the local aggregation rule.

\begin{table*}[t]
\centering
\footnotesize 
\caption{Unified taxonomy of analytical Independent Cascade approximations.}
\label{tab:approximation_taxonomy}
\begin{tabular}{p{1.7cm}p{2.7cm}p{3.8cm}p{3.7cm}}
\toprule
\textbf{Method}
&
\textbf{Temporal state}
&
\textbf{History / context}
&
\textbf{Incoming aggregation}
\\
\midrule

NM
&
Cumulative marginal
&
Node state; no explicit path history
&
Nonlinear noisy-OR
\\ \midrule

SSS \cite{sss}
&
Cumulative steady-state marginal
&
Node state; no explicit path history
&
Nonlinear noisy-OR
\\ \midrule

SPM \cite{SPM}
&
Earliest activation layer
&
Shortest-path propagation only
&
Nonlinear noisy-OR
\\ \midrule

SP1M \cite{SPM}
&
Two activation layers
&
Shortest and shortest-plus-one path lengths
&
Nonlinear noisy-OR with second-layer survival correction
\\ \midrule

MIA \cite{mia_icphard}
&
Final / cumulative marginal
&
Target-specific maximum-influence in-arborescence
&
Nonlinear noisy-OR on the surrogate tree
\\ \midrule

\cite{yang2012approximation}
&
Step-wise activation
&
Node state; optionally combined with path restriction
&
Noisy-OR; GS further linearizes to an additive sum
\\ \midrule

\cite{unified_influence_model}
&
Step-wise and cumulative marginals
&
Node state; no explicit path history
&
Nonlinear noisy-OR with target depletion
\\ \midrule

Indep. \cite{chen_taylor_2017}
&
Step-wise susceptible / newly-active state
&
Unconditional node marginals
&
Nonlinear noisy-OR
\\ \midrule

Markov
\cite{chen_taylor_2017}
&
Step-wise susceptible / newly-active state
&
Target-conditioned neighbor marginals
&
Nonlinear noisy-OR
\\ \midrule

ALE \cite{ale}
&
Time-indexed walk contributions aggregated cumulatively
&
Node state; unrestricted weighted walks
&
Linear additive propagation
\\ \midrule

DMP \cite{dmp,dmp_unidirectional}
&
Cumulative marginal
&
First-order edge cavity / immediate exclusion
&
Nonlinear noisy-OR
\\ \midrule

$\text{DMP}_r,~ r>1$
&
Cumulative marginal
&
Higher-order cavity state / additional excluded history
&
Nonlinear noisy-OR
\\ \midrule

NBUB \cite{nbub}
&
Path-length-resolved contribution
&
Edge-specific nonbacktracking history
&
Nonlinear product bound
\\ \midrule

IRIE \cite{irie}
&
Time-aggregated influence score
&
Cavity relations collapsed to node-level scores
&
Damped linear additive propagation
\\ \midrule

TDA
\cite{burkholz2019,burkholz2021cascade}
&
Final-state marginal / cascade-size distribution
&
Tree or edge-cavity context
&
Noisy-OR for marginals; distributional convolution for cascade size
\\

\bottomrule
\end{tabular}
\end{table*}

The taxonomy emphasizes that these methods are not isolated constructions.
Many share the same local IC aggregation rule and differ primarily in what
state is propagated and what information about previous propagation is
discarded. NM, SWE, and DMP all retain the nonlinear noisy-OR operator but
use cumulative node states, step-wise node states, and edge-cavity states,
respectively. ALE and related linear methods arise by replacing that operator
with its first-order additive approximation. Methods such as MIA and SPM
instead act orthogonally by restricting the admissible propagation structure.
This view provides a common mathematical basis for comparing approximation
error, computational complexity, and the effect of increasingly rich
propagation context in the experiments that follow.

\section{New Approximation Methods}
\label{sec:new-methods}
A preliminary version of this work introduced the SWE-NO and modified ALE
estimators \cite{pozar2026approximation}. Their definitions are restated here
for completeness and to provide a common basis for the extensions developed
below. The present work further introduces a general continuation framework,
cavity-based variants, and higher-order dynamic message-passing corrections.

\subsection{Compensatory Continuation of Approximation Methods}
\label{subsec:Compensatory-Continuation}

Many analytical diffusion approximations exhibit systematic directional
errors.  Suppose that a primary approximation tends to be conservative, while
a second approximation tends to propagate activation more aggressively.  The
output of the former can then be used to initialize the latter.  Rather than
combining two final estimates algebraically, we therefore compose their
approximation operators.  We refer to this procedure as \emph{compensatory
continuation} (called smoothing in \cite{pozar2026approximation}).

Let $\Phi_F^{(T)}(p^{(0)})$ denote the node-level estimate produced by method
$F$ after $T$ propagation steps, and suppose that this estimate can be used as
the initial state of a second method $G$.  We define the $L$-step compensatory
continuation of $F$ by $G$ as
\[
\hat p^{(T,L)}
=
\Phi_G^{(L)}
\left(
\Phi_F^{(T)}(p^{(0)})
\right).
\]
We are implicitly assuming that the successive applications of the $F$ and $G$
methods keeps the output in the \([0,1]\) range, and take the convention $\Phi_G^{(0)}:=\operatorname{Id}$ to be the identity.
The continuation steps should not in general be interpreted as additional IC
time steps; they are additional approximation steps applied to the output of
the primary estimator.  The construction is most natural when the two methods
exhibit opposing systematic errors, although no improvement is guaranteed for
an arbitrary pair of estimators.


\subsection{SWE-NO}
\label{subsec:swe-no}

What was referred to simply as SWE in the preliminary version
\cite{pozar2026approximation} is here called SWE-NO (SWE with noisy-OR
accumulation) in order to distinguish it from the standard step-wise
estimator.  SWE-NO computes the approximate step-wise activation
quantities $\hat r_u(t)$ using the same recurrence as SWE, but changes how these quantities are
aggregated over time.  Instead of
\[
\hat p_u(t)=\sum_{i=0}^{t}\hat r_u(i),
\]
we use
\[
a_u(t)
=
1-\prod_{i=0}^{t}
\left(1-\hat r_u(i)\right).
\]
The quantity $a_u(t)$ should not be interpreted as the probability of
activation by time $t$, since the events represented by the
$\hat r_u(i)$ are intended to describe mutually exclusive activation
times.  Rather, the noisy-OR accumulation is a deliberate damping
mechanism.

For any $\hat r_u(i)\in[0,1]$,
\[
1-\prod_{i=0}^{t}(1-\hat r_u(i))
\leq
\sum_{i=0}^{t}\hat r_u(i),
\]
so SWE-NO is systematically no larger than SWE when applied to the
same step-wise estimates.  Moreover,
\[
1-\prod_i(1-\hat r_i)
=
\sum_i\hat r_i
+O(\|\hat r\|^2),
\]
and therefore the two aggregation rules agree to first order in weak
diffusion regimes.

Because SWE-NO deliberately damps the accumulated estimate, we use it
as a primary estimator within compensatory continuation and apply NM
as the continuation operator:
\[
\operatorname{SWE\text{-}NO+NM}(T,L)
=
\Phi_{\mathrm{NM}}^{(L)}
\left(a(T)\right).
\]
Here $a(T)$ is treated as the initial activation vector of the NM
continuation, however the $(1-p_u)$ term that multiplies the noisy-or update in the NM retains the original initial probabilities $p_u$ and not $a(T)$. These continuation iterations are approximation steps
and should not be interpreted as physical diffusion times following
time $T$.

For a fixed horizon $T$ and continuation depth $L$, the computation
requires
\[
O \big ((T+L)|E| \big )
\]
time and $O(|V|+|E|)$ memory under the usual sparse representation.


\subsection{Independent Propagation-Layer Approximation}
\label{subsec:IPL}
The Independent Propagation-Layer Approximation (IPL) is directly motivated by the simplicity of ALE \cite{ale}. In \cite{pozar2026approximation} it was called the modified aggregated linear effect, here we adopt a more descriptive name.

Let \(A_u^{(t)}\) be the event that node \(u\) is reachable by a length \(t\) activation path in the IC network. ALE approximates the probability mass associated with layer \(t\) by

\[
a(0)=p(0), \quad a(t+1)=B\cdot a(t) = B^{t+1}a(0)
\]
where \(B\in [0,1]^{|V|\times |V|}\) is the weighted adjacency matrix with \(B_{v,u}=p_{uv}\) for \((u,v)\in E\) and \(B_{v,u}=0\) otherwise. It therefore makes the approximation \(\Pr \left [A_u^{(t)} \right ]\approx a(t)\). ALE approximates the union of layer events as 
\[
\Pr \left [\cup_{t\le T}A_u^{(t)} \right ] \approx \sum_{t\le T}\Pr \left [A_u^{(t)} \right ] \approx \sum_{t\le T}a_u(t).
\]
Therefore we have
\[
p^{\text{ALE}}(T) = \sum_{t\le T}a(t).
\]

In a similar spirit to SWE-NO, IPL makes an additional approximation
when aggregating contributions across propagation depths: it treats the
layer events
\[
A_u^{(0)},A_u^{(1)},\ldots,A_u^{(T)}
\]
as mutually independent.
Consequently
\[
\Pr \left [ \cap_{t\le T}\overline{A_u^{(t)}} \right ] \approx \prod_{t\le T}\Pr\left [ \overline{A_u^{(t)}} \right ] \approx \prod_{t\le T}\big (1-a_u(t) \big ).
\]
This yields
\[
p^{\text{IPL}}_u(T) = 1-\prod_{t\le T}\big (1-a_u(t) \big )
\]
or in matrix form
\[
p^{\text{IPL}}(T) = 1-\bigodot_{t\le T}\big (1-B^tp(0) \big )
\]
where we take \(B^0=I\) to be the identity matrix.

ALE sums the contributions associated with walks of different lengths, thereby implicitly neglecting overlap between propagation depths. IPL retains ALE's linear propagation within each depth, but accumulates these depth-wise contributions through a survival product. Equivalently, it applies a noisy-OR approximation across propagation depths, preventing repeated influence arriving at different depths from being counted additively.

Expanding the temporal survival product gives
\[
1-\prod_{t=0}^{T}(1-a_u^{(t)})
=
\sum_t a_u^{(t)}
-
\sum_{s<t}a_u^{(s)}a_u^{(t)}
+\cdots.
\]
Thus standard ALE is precisely the first-order approximation of the
IPL aggregation rule.  IPL retains ALE's linear propagation within
each depth while restoring a nonlinear union approximation across
propagation depths.

The vectors $a^{(t)}$ need not be formed using explicit powers of
$B$.  Iterating $a^{(t+1)}=Ba^{(t)}$ requires $O(|E|)$ work per
layer, while the survival product requires $O(|V|)$ work.  Hence IPL
has time complexity $O(T(|E|+|V|))$, or $O(T|E|)$ for the sparse
graphs considered here, and requires $O(|E|+|V|)$ memory.

Note that in practical implementation of ALE and IPL clamping on each layer is used to prevent the computed values from exceeding $1$.


\subsection{Second-Order Incoming-Aggregation Corrections}
\label{subsec:second-order-ale}

Both ALE and IPL use the linear propagation operator $B$ to
approximate the probability that a node receives activation from its
incoming neighbors.  This corresponds to the first-order expansion of
the noisy-OR operator.  In particular, for
\[
y_{vu}=p_{vu}x_v,
\]
we have
\[
1-\prod_{v\in N(u)}(1-y_{vu})
=
\sum_{v\in N(u)}y_{vu}
-
\sum_{\substack{v,w\in N(u)\\v<w}}y_{vu}y_{wu}
+
O(\|y\|^3).
\]
Standard ALE retains only the first term.  We therefore consider a
second-order approximation that additionally accounts for pairwise
overlap between incoming activation opportunities.

Let
\[
\mathcal{G}_2(x)_u
=
\sum_{v\in N(u)}p_{vu}x_v
-
\sum_{\substack{v,w\in N(u)\\v<w}}
p_{vu}p_{wu}x_vx_w.
\]
Using
\[
\sum_{i<j}z_i z_j
=
\frac{1}{2}
\left[
\left(\sum_i z_i\right)^2-\sum_i z_i^2
\right],
\]
the second-order term can be computed without explicitly enumerating
pairs of incoming edges.  In the incidence-matrix notation introduced
previously, with $B=DWS$, this gives
\begin{equation}
\label{eq:noisy-or-second-order}
\mathcal{G}_2(x)
=
Bx
-
\frac{1}{2}
\left[
(Bx)^{\odot 2}
-
D(WSx)^{\odot 2}
\right],
\end{equation}
where $\odot$ denotes element-wise multiplication.

We define ALE2 by replacing the linear ALE propagation with this
second-order operator,
\[
a^{(0)}=p^{(0)},\qquad
a^{(t+1)}
=
\Pi_{[0,1]}
\left(
\mathcal{G}_2(a^{(t)})
\right),
\]
and accumulating the resulting propagation layers as in ALE,
\[
p^{\mathrm{ALE2}}(T)
=
\Pi_{[0,1]}
\left(
\sum_{t=0}^{T}a^{(t)}
\right),
\]
where $\Pi_{[0,1]}$ denotes component-wise clipping to the probability
interval.

The same correction can be combined with the independent
propagation-layer aggregation introduced above.  We refer to this
variant as IPL2:
\begin{equation}
\label{eq:ipl2}
p^{\mathrm{IPL2}}_u(T)
=
1-
\prod_{t=0}^{T}
\left(1-a_u^{(t)}\right),
\qquad
a^{(t+1)}
=
\Pi_{[0,1]}
\left(
\mathcal{G}_2(a^{(t)})
\right).
\end{equation}
Thus, ALE2 and IPL2 correct a different source of overlap from IPL
itself.  The second-order propagation operator accounts for pairwise
overlap between different incoming neighbors \emph{within} a
propagation layer, whereas IPL accounts for overlap between
contributions arriving at different propagation depths.

Despite including pairwise interactions, the second-order correction
does not require quadratic work in the node degrees.  Equation
\eqref{eq:noisy-or-second-order} can be evaluated using two sparse
edge aggregations and element-wise vector operations, requiring
$O(|E|+|V|)$ time per propagation step.  Consequently, both ALE2 and
IPL2 retain $O(T(|E|+|V|))$ time complexity, or $O(T|E|)$ on the
sparse networks considered here, with $O(|E|+|V|)$ memory.

Note that once clipping occurs, ALE2/IPL2 are no longer literal second-order Taylor approximations at that component; they are clipped heuristics based on that expansion. 

\subsection{Cavity Extensions of Existing Approximations}
\label{subsec:cavity-extensions}
We also considered cavity-based extensions of several existing approximations, including SWE, SWE-NO, ALE, IPL, and the steady-state DMP/SSS formulation, together with the hybrid SWE-HIB-C variant. These constructions use edge-specific cavity messages to suppress immediate feedback while preserving the characteristic update rule of the underlying method. Because these variants are secondary to the main developments, their full definitions and implementation-specific details are deferred to Appendix~\ref{app:cavity_extensions}.


\subsection{Higher-Order Dynamic Message Passing}
\label{subsec:higher-order-dmp}

Standard DMP retains only the immediately preceding cavity vertex.
Consequently, when the recursive expansion would require a larger
cavity set, all but the most recent exclusion are discarded.  This
suggests a hierarchy in which the last $h$ cavity vertices are retained
rather than only one. Hence, we propose to compute the cavity quantities using Equation~\ref{eq:independent_general_cavity} instead. As discussed in Section~\ref{subsec:exact-marginal}, retaining the
complete cavity history is generally intractable.  We therefore
truncate the history to its most recent $h$ vertices and refer to the
resulting approximation as DMP of order $h$.

For an ordered cavity history
$H=(v_1,\ldots,v_k)$, let $\tau_h(H,u)$ denote the suffix of
$(v_1,\ldots,v_k,u)$ containing at most the last $h$ vertices.  In
particular, $\tau_0(H,u)=\emptyset$.  We then define DMP of order $h$
by
\begin{equation}
\label{eq:dmp-order-h}
\hat p^{(h)}_{u\rightarrow H}(t)
=
1-(1-p_u)
\prod_{v\in N(u)\setminus H}
\left[
1-p_{vu}\,
\hat p^{(h)}_{v\rightarrow \tau_h(H,u)}(t-1)
\right],
\end{equation}
with
\[
\hat p^{(h)}_{u\rightarrow H}(0)=p_u.
\]
The corresponding node marginal is
\[
\hat p^{(h)}_u(t)
=
\hat p^{(h)}_{u\rightarrow\emptyset}(t).
\]
Thus, $h=0$ gives the NM recurrence, while $h=1$ gives standard DMP.

It was shown that the node marginals obtained by DMP are an upper bound on the true node marginals \cite{dmp}.
We prove the stronger statement that, for every full cavity history
$C$ and its retained suffix $H=\tau_h(C)$,
\[
p_{u\rightarrow C}(t)
\leq
\hat p^{(h)}_{u\rightarrow H}(t).
\]

We similarly prove by induction the stronger message-wise statement
\[
\hat p^{(h_1)}_{u\rightarrow H_1}(t)
\leq
\hat p^{(h_2)}_{u\rightarrow H_2}(t)
\]
whenever $h_1\geq h_2$ and $H_2$ is the corresponding suffix of
$H_1$.

\begin{theorem}[Monotone upper-bound hierarchy]
\label{thm:higher-order-dmp-bound}
Consider the standard Independent Cascade model, where all initial-seed
indicators and live-edge indicators are mutually independent. Deterministic
seeds are included as the degenerate case in which the corresponding seed
indicators are fixed.

Let $p_u(t)$ denote the exact probability that node $u$ is active by
time $t$, and let $\hat p^{(h)}_u(t)$ denote the order-$h$
approximation defined by Equation~\eqref{eq:dmp-order-h}. Then, for every
node $u$, every $t\geq 0$, and all integers $h_1\geq h_2\geq 0$,
\begin{equation}
\label{eq:dmp-order-hierarchy}
p_u(t)
\leq
\hat p^{(h_1)}_u(t)
\leq
\hat p^{(h_2)}_u(t).
\end{equation}
Consequently,
\[
p_u(t)
\leq \cdots
\leq
\hat p^{(2)}_u(t)
\leq
\hat p^{(1)}_u(t)
\leq
\hat p^{(0)}_u(t),
\]
where $\hat p^{(1)}$ is standard DMP and
$\hat p^{(0)}$ is NM.
\end{theorem}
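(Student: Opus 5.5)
The plan is to prove two message-wise statements by induction on $t$ and then specialize them to the empty cavity history. Throughout, I treat a cavity set $C$ as an \emph{ordered} history $(v_1,\ldots,v_k)$ of distinct vertices, so that suffixes $\tau_h(C)$ make sense. The two statements are:
\[
\text{(A)}\quad p_{u\rightarrow C}(t)\le \hat p^{(h)}_{u\rightarrow \tau_h(C)}(t),
\qquad
\text{(B)}\quad \hat p^{(h_1)}_{u\rightarrow H_1}(t)\le \hat p^{(h_2)}_{u\rightarrow H_2}(t).
\]
Statement (A) holds for every ordered $C$ and every $h\ge 0$. Statement (B) holds whenever $h_1\ge h_2$ and $H_2=\tau_{h_2}(H_1)$.

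Taking $C=H_1=H_2=\emptyset$ in both statements gives $p_u(t)\le\hat p^{(h_1)}_u(t)\le \hat p^{(h_2)}_u(t)$, which is \eqref{eq:dmp-order-hierarchy}. The displayed chain then follows by iterating over consecutive orders, using $h=1$ for DMP and $h=0$ for NM as noted after Equation~\eqref{eq:dmp-order-h}.

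For (B), I induct on $t$. At $t=0$ both sides equal $p_u$. For the inductive step, compare the two instances of Equation~\eqref{eq:dmp-order-h} factor by factor, using three observations:
\begin{itemize}
\item The suffix $H_2$ is contained in $H_1$, so $N(u)\setminus H_1\subseteq N(u)\setminus H_2$. The right-hand product therefore has extra factors, each lying in $[0,1]$, and these only decrease the product, i.e.\ increase $1-(1-p_u)\prod$.
\item For the common factors, the last $h_2$ entries of $(H_1,u)$ coincide with the last $h_2$ entries of $(H_2,u)$. This is a short case check on whether $|H_1|\ge h_2$. Hence $\tau_{h_2}(H_2,u)=\tau_{h_2}(\tau_{h_1}(H_1,u))$, so the inductive hypothesis applies at time $t-1$ with the pair $\tau_{h_1}(H_1,u)$ and $\tau_{h_2}(H_2,u)$.
\item The map $x\mapsto 1-(1-p_u)\prod_v(1-p_{vu}x_v)$ is coordinatewise nondecreasing on $[0,1]$, so the smaller messages on the left yield the smaller value.
\end{itemize}

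For (A), I also induct on $t$; the base case is equality. The first step is an exact recursion in the cavity graph $G\setminus C$. Node $u$ is active by time $t$ if and only if it is a seed or some $v\in N(u)\setminus C$ is active by time $t-1$ in $G\setminus(C\cup\{u\})$ and the edge $(v,u)$ is live. This uses the remark after Equation~\eqref{eq:exact_marginals} that a witnessing live path can be taken simple, so it avoids $u$. The events $E_v=\{v \text{ active by } t-1 \text{ in } G\setminus(C\cup\{u\})\}\cap\{X_{vu}=1\}$ are increasing functions of the independent seed and edge indicators, and none of them involves $u$'s own seed indicator. By the Harris (FKG) inequality, the decreasing complements $\overline{E_v}$ are positively correlated, so
\[
\Pr\Big[\bigcap_v\overline{E_v}\Big]\ge\prod_v\bigl(1-p_{vu}\,p_{v\rightarrow C\cup\{u\}}(t-1)\bigr).
\]
Here the factorization $\Pr[E_v]=p_{vu}\,p_{v\rightarrow C\cup\{u\}}(t-1)$ holds because edge $(v,u)$ is not present in $G\setminus(C\cup\{u\})$. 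Consequently
\[
p_{u\rightarrow C}(t)\le 1-(1-p_u)\prod_{v\in N(u)\setminus C}\bigl(1-p_{vu}\,p_{v\rightarrow C\cup\{u\}}(t-1)\bigr).
\]
The inductive hypothesis, applied to the ordered history $(C,u)$, bounds each message by $\hat p^{(h)}_{v\rightarrow\tau_h(C,u)}(t-1)$. Since $\tau_h(C,u)=\tau_h(\tau_h(C),u)$, this is exactly the message appearing in Equation~\eqref{eq:dmp-order-h} for $H=\tau_h(C)$. Finally, $\tau_h(C)\subseteq C$ gives $N(u)\setminus C\subseteq N(u)\setminus\tau_h(C)$, and the extra factors can only increase the right-hand side, completing the step.

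I expect the main obstacle to be the correlation step in (A). It requires carefully verifying that each $E_v$ is monotone increasing in a common product space of independent Bernoulli variables. It also requires checking that $u$'s own seed indicator is independent of the $E_v$, so that the factor $(1-p_u)$ splits off exactly.

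A secondary subtlety is the bookkeeping of ordered histories versus sets. Repeated vertices cannot arise, since the vertices of $N(u)\setminus H$ are excluded. Suffix truncation must commute as claimed, which is the identity $\tau_h(C,u)=\tau_h(\tau_h(C),u)$.

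Cavity monotonicity, namely that removing more nodes decreases activation probability, follows from a direct coupling on the same live-edge graph. It is not needed separately if the induction is carried out with the full history $C$ as above.
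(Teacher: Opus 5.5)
Your proposal is correct and follows the paper's proof essentially step for step: the exact one-step cavity recursion, Harris' inequality to obtain the noisy-OR upper step, induction on $t$ over full ordered histories using $\tau_h(C\oplus u)=\tau_h(\tau_h(C)\oplus u)$ together with the extra $[0,1]$ factors, and a separate induction on $t$ for the order comparison. Your parametrization $H_2=\tau_{h_2}(H_1)$ is equivalent to the paper's choice of a common full history $C$ with $H_i=\tau_{h_i}(C)$, since $h_1\ge h_2$.
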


Note that the theorem establishes monotonicity of the finite-order upper bounds.
It does not, by itself, imply that
$\hat p^{(h)}_u(t)\to p_u(t)$ as $h\to\infty$.
The proof of the above theorem can be found in Appendix~\ref{app:proof}.

In our empirical evaluation, we will explicitly consider $h=2$ alongside the rest of the methods.

For $h=2$, the state retains the two most recent vertices in the
cavity history. In addition to the standard first-order message
$\hat p_{u\rightarrow v}(t)$, the recursion may therefore require
second-order messages
\[
\hat p_{u\rightarrow v,w}(t),
\]
in which both $v$ and $w$ are excluded from the incoming neighborhood
of $u$. The additional exclusion of $w$ changes the update only when
$w\in N(u)$. Under the in-neighbor convention used here, a
second-order state reached along a directed history
\[
u\rightarrow v\rightarrow w
\]
is therefore nonredundant only when the closing edge
\[
w\rightarrow u
\]
is also present. In that case, the history forms a directed
three-cycle and the additional cavity exclusion removes a factor that
would otherwise reintroduce feedback through this cycle.

Let $\tau$ denote the number of such directed triangle-closing message
states. Importantly, $\tau$ counts oriented message configurations,
not unordered graph triangles. A single triple of vertices can
contribute more than one state when several directed orientations are
present.

The second-order implementation does not recompute a complete incoming
product separately for every triangle-closing state. For each
first-order context, the full incoming product is computed once and
cached. A nonredundant second-order state is then obtained by removing
the single factor corresponding to the additional excluded vertex.
When this factor is nonzero this can be done by division; equivalently,
one may use prefix--suffix products or log-domain products to avoid
numerical problems when a factor vanishes. Consequently, all
first-order states require $O(|E|)$ work per iteration, while all
nonredundant second-order corrections together require only
$O(\tau)$ additional work. The total running time is therefore
\[
O\!\left(T(|E|+\tau)\right),
\]
with
\[
O(|E|+\tau)
\]
memory.

The worst-case value of $\tau$ can be superlinear in the number of
edges, so the second-order method is not uniformly linear-time on
arbitrary sparse graphs. However, its overhead is controlled by the
number of triangle-closing configurations rather than by the total
number of length-two cavity histories. In particular, for
bounded-degree graph families,
\[
|E|=O(|V|)
\qquad\text{and}\qquad
\tau=O(|V|),
\]
so order-two DMP retains
\[
O(T|E|)
\]
asymptotic running time and differs from standard DMP only by a larger
constant factor. More generally, the same conclusion holds whenever
the number of directed triangle-closing states grows approximately
linearly with graph size.

Even when $\tau$ is larger, the triangle-selective representation can
be substantially smaller than a naive second-order construction. Such
a construction would maintain states for all length-two cavity
histories, whereas the present implementation instantiates only those
histories for which the additional exclusion actually changes the
recurrence.

The general idea of improving message-passing approximations by retaining
additional propagation history is not new and is closely related to several
higher-order and loop-corrected message-passing approaches in the literature.
Our contribution is the particular formulation of this idea for finite-horizon
IC marginals and the resulting monotone hierarchy of upper bounds. A more
detailed comparison with related higher-order constructions is given in
Appendix~\ref{app:higher-order-related-work}.


\section{Experiments}
\label{sec:experimental-methodology}

\subsection{Selection of Benchmark Methods}
\label{subsec:benchmark-methods}

The analytical methods discussed in Sections~\ref{sec:unified-view}--\ref{sec:new-methods} span
several distinct approximation strategies, and evaluating every known
variant would make the experimental comparison unnecessarily large.
We therefore selected a representative set intended to cover the main
approximation domains identified by the taxonomy in Section~\ref{sec:unified-view},
while also including the new methods introduced in this work. The aim
is consequently not to enumerate every published IC approximation, but
to compare representative methods that differ in their treatment of
temporal information, feedback, dependency, and admissible propagation
paths.

The cumulative mean-field family is represented by NM and
Steady-State Spread (SSS), respectively. NM provides the basic finite-horizon
node-level noisy-OR approximation, while SSS represents its
fixed-point counterpart. The step-wise family is represented by SWE,
together with cavity SWE, which introduces edge-specific messages to
suppress immediate feedback. We additionally evaluate SWE-NO and its
cavity version, which modify the temporal accumulation of the step-wise
estimates, and therefore test the compensatory-continuation ideas
introduced in this work.

SWE-HIB-C is included primarily as a diagnostic hybrid between
step-wise and cavity propagation, allowing us to examine whether the
two modeling choices, cumulative versus step-wise, provide complementary benefits.

The message-passing family is represented by standard first-order DMP
and by its fixed-point, effectively infinite-horizon variant. We
additionally evaluate second-order DMP. The latter is included primarily
as a proof of concept for the higher-order cavity hierarchy introduced
in Section~\ref{sec:new-methods}: rather than attempting to evaluate increasingly
large cavity orders, which rapidly increases the number of states,
order \(h=2\) provides the simplest nontrivial test of whether retaining
additional propagation history can improve over ordinary DMP.

Linear and walk-based approximations are represented by ALE and IPL.
The second-order variants ALE2 and IPL2 are additionally included in
the real-world benchmark to test whether correcting the leading
pairwise overlap neglected by the first-order linear propagation
improves estimation accuracy. The synthetic benchmark also contains
the additive SWE variant and its cavity analogue, which retain the
step-wise depletion mechanism while replacing the nonlinear incoming
aggregation by an additive approximation. Finally, SPM and SP1M are
included as representatives of methods that simplify diffusion by
restricting admissible propagation depths rather than modifying the
local IC update itself.

Several methods discussed in the preceding sections were considered, but
not retained for the full benchmark. SSS-Noself substantially increases
the computational cost of SSS, because it requires auxiliary
cavity-like fixed-point computations for individual target nodes. Its
cost was prohibitive at the graph sizes considered here. The
nonbacktracking upper bound (NBUB) was excluded for the same practical
reason: its additional edge-specific state and associated computation
made it substantially less scalable than the methods retained in the
main comparison.

MIA was also initially considered, particularly in combination with
compensatory continuation. Since MIA restricts diffusion to a maximum
influence in-arborescence and thereby discards admissible propagation
routes, its estimates are naturally biased downward, in fact, since its influence 
estimation is exact on the in-arborescence, it provides a lower bound. This initially
made it an appealing base estimator for compensatory continuation,
whose purpose is to recover influence omitted by an underestimating
first stage. In practice, however, the resulting MIA-based estimator
was less accurate than the strongest cavity and message-passing
methods, while construction of the target-specific influence
arborescences became computationally expensive in moderate and strong
diffusion regimes. We were therefore unable to evaluate MIA
consistently over the complete benchmark grid, and it is not included
in the principal aggregate comparison. Results on the subset of
instances for which the computation completed are reported separately.

\subsection{Experimental Setup}

The complete experimental methodology, including network generation and
preprocessing, diffusion-regime calibration, hyperparameter selection,
Monte Carlo reference estimation, evaluation metrics, implementation details and runtime
measurement, is provided in
Appendix~\ref{app:exp_method}. Here we summarize only the
main experimental design.

We evaluate the methods on seven real-world networks: NetHEPT,
WikiVote, Enron, Epinions, Slashdot, ca-CondMat, and cit-HepTh, spanning
collaboration, communication, trust, voting, and citation networks.
Their structural properties are summarized in
Table~\ref{tab:real-world-network-statistics}. We additionally consider
five synthetic network families, namely Erdős--Rényi,
Barabási--Albert, Holme--Kim, the Stochastic Block Model, and
Watts--Strogatz, generated across several density levels. Across both
benchmarks, we evaluate a broad range of diffusion conditions by varying
the seed-set size, edge-probability model, and overall diffusion
strength. In particular, the experiments include constant,
weighted-cascade, and heterogeneous Dirichlet edge probabilities,
multiple seed fractions, and weak, moderate, and strong diffusion
regimes. Hyperparameters are selected on separate calibration instances
and all reported comparisons are performed on held-out instances.

\begin{table*}[t]
\centering
\footnotesize
\caption{Structural properties of the real-world benchmark networks. For directed datasets, the edge count refers to directed arcs. Average degree, clustering, and triangle counts are computed on the simple undirected projection. Reciprocity is the fraction of directed arcs whose reverse arc is also present.}
\label{tab:real-world-network-statistics}
\begin{tabular}{lrrrrrrr}
\toprule
 & NetHEPT & WikiVote & Enron & Epinions & Slashdot & CondMat & cit-HepTh \\
\midrule
Nodes & 15\,229 & 7\,115 & 36\,692 & 75\,879 & 82\,168 & 23\,133 & 27\,770 \\
Edges & 31\,376 & 103\,689 & 367\,662 & 508\,837 & 870\,161 & 93\,439 & 352\,768 \\
Directed & False & True & True & True & True & False & True \\
Avg. degree & 4.12 & 28.32 & 10.02 & 10.69 & 12.27 & 8.08 & 25.37 \\
Avg. cl. co. & 0.499 & 0.141 & 0.497 & 0.138 & 0.060 & 0.633 & 0.312 \\
Triangles & 30\,846 & 608\,389 & 727\,044 & 1\,624\,481 & 602\,592 & 173\,361 & 1\,478\,735 \\
Reciprocity & -- & 0.056 & 1.000 & 0.405 & 0.841 & -- & 0.003 \\
\bottomrule
\end{tabular}
\end{table*}

\section{Results}
\label{sec:results}

We now present the empirical results of the benchmark introduced in
Section~\ref{sec:experimental-methodology}. Section~\ref{subsec:overall-results}
examines overall accuracy and the effect of the proposed higher-order
DMP correction. Section~\ref{subsec:stepwise-results} compares the
step-wise, cavity, and steady-state approximation families, while
Section~\ref{subsec:linear-results} considers the linear, additive, and
path-restricted methods. Finally,
Section~\ref{subsec:regime-results} analyzes how approximation accuracy
changes with diffusion strength, seed-set size, and edge-probability
model.

\subsection{Overall Accuracy and Higher-Order DMP}
\label{subsec:overall-results}

The main accuracy results for the real-world and synthetic benchmarks are
summarized in Figure~\ref{fig:pooled_rmse}, with additional metrics reported
in Tables~\ref{tab:real_world-pooled-metrics} and
\ref{tab:synthetic-pooled-metrics}. Across both benchmark suites,
\(\operatorname{DMP}_2\) achieved the lowest pooled median RMSE, followed by
standard first-order DMP. On the real-world benchmark, their pooled median
RMSE values were \(0.00728\) and \(0.00915\), respectively, while on the
synthetic benchmark they were \(0.00208\) and \(0.00211\).

The advantage of \(\operatorname{DMP}_2\) is also visible when the networks
are considered separately. It achieved the lowest dataset-level median RMSE
on each of the seven real-world networks and the lowest graph-family-level
median RMSE on each of the five synthetic graph families. This does not,
however, imply uniform dominance at the level of individual instances. On
the real-world benchmark, \(\operatorname{DMP}_2\) obtained the lowest RMSE
on 990 of the 1140 held-out instances and the lowest setting-level median
RMSE on 201 of the 228 feasible benchmark settings. The corresponding
numbers on the synthetic benchmark were 1571 of 2700 instances and 323 of
540 settings. Thus, the pooled advantage is particularly consistent on the
real-world networks, whereas several methods remain competitive with
\(\operatorname{DMP}_2\) on individual synthetic instances.

Here, \(\operatorname{DMP}_2\) denotes the second-order DMP approximation
introduced in Section~\ref{sec:new-methods}, while
\(\operatorname{DMP}_\infty\) denotes the steady-state version of
first-order DMP. The performance of \(\operatorname{DMP}_2\) therefore
provides empirical evidence that retaining additional cavity history can
correct errors that remain after immediate backtracking has been removed.

The improvement comes at an additional computational cost. On the
real-world benchmark, the median evaluation time increased from
approximately \(0.039\) seconds for DMP to \(0.566\) seconds for
\(\operatorname{DMP}_2\). The latter is comparable to the
\(0.513\)-second median runtime of \(\operatorname{DMP}_\infty\).
The overhead is considerably smaller on the synthetic networks, where the
corresponding median runtimes were \(0.0092\), \(0.0320\), and \(0.0825\)
seconds for DMP, \(\operatorname{DMP}_2\), and
\(\operatorname{DMP}_\infty\), respectively. As described in
Section~\ref{sec:experimental-methodology}, the topology-specific construction of the second-order
triangle states is performed separately from the timed evaluation, so these
numbers represent per-evaluation runtimes on a prepared graph.

The benefit of the second-order correction depends strongly on the graph
structure. This relationship is particularly clear in the synthetic
benchmark. For Watts--Strogatz graphs, the median RMSE decreased from
\(0.01450\) for DMP to \(0.01095\) for \(\operatorname{DMP}_2\), a
reduction of approximately \(24.5\%\). For Holme--Kim graphs, the
corresponding reduction was from \(0.00240\) to \(0.00217\), or
approximately \(9.5\%\). In contrast, the differences on Erdős--Rényi,
Barabási--Albert, and stochastic block model graphs were very small.
This is consistent with the interpretation of the second-order method:
its additional states become relevant primarily when short cyclic
structures create dependencies that are not removed by first-order
cavity messages.

The real-world networks show a qualitatively compatible, although less
conclusive, pattern. The dataset-level advantage of
\(\operatorname{DMP}_2\) exhibits a positive descriptive association
with average clustering, but only seven distinct real-world topologies
are available, so this observation should not be interpreted as a strong
statistical result. Moreover, the raw number of undirected triangles does
not show a corresponding positive relationship. This distinction is
important because the additional states of \(\operatorname{DMP}_2\)
arise specifically from directed triangle-closing configurations, whereas
the clustering coefficients and triangle counts reported in
Table~\ref{tab:real-world-network-statistics} are computed on the simple undirected projection.

The advantage of \(\operatorname{DMP}_2\) also depends on the diffusion
regime and seed fraction. When all real-world seed fractions are pooled,
the median reduction in RMSE relative to DMP is approximately
\(1.5\times10^{-5}\), \(0.00156\), and \(0.00308\) for
\(g^\star=0.10\), \(0.35\), and \(0.60\), respectively. Thus, in
absolute terms, the second-order correction becomes increasingly useful
as the diffusion becomes stronger.

There is, however, a notable interaction with the seed-set size. At the
smallest real-world seed fraction, \(s=0.01\), the median improvement is
largest in the moderate diffusion regime:
\[
0.00031,\qquad 0.00500,\qquad 0.00356
\]
for \(g^\star=0.10\), \(0.35\), and \(0.60\), respectively. The same
pattern is especially pronounced for the synthetic Watts--Strogatz
graphs, where the corresponding improvements are approximately
\(0.00492\), \(0.01494\), and \(0.00970\). At very small seed fractions, moderate diffusion can produce
the largest benefit from the second-order correction, whereas after
pooling over seed fractions the strongest diffusion regime yields the
largest overall improvement. This suggests that higher-order cavity
information becomes most useful when propagation is sufficiently strong
for short recurrent paths to matter, while the effect is reduced when
a large initial seed set already activates a substantial part of the
network.

\begin{figure}[htbp]
    \centering
    \begin{subfigure}[b]{0.48\textwidth}
        \centering
        \includegraphics[width=\textwidth]{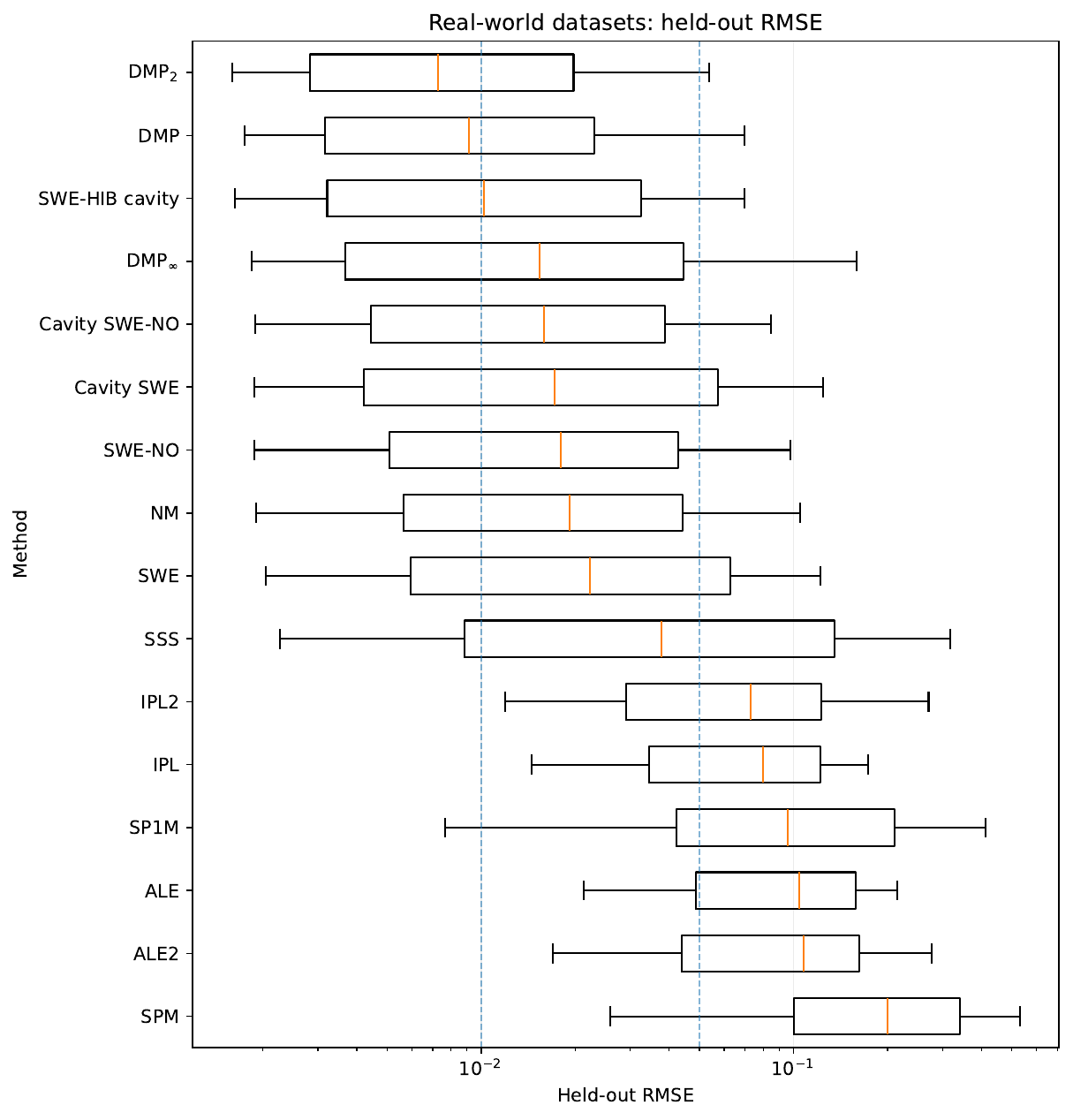}
        \caption{Real world benchmarks.}
        \label{fig:pdf1}
    \end{subfigure}
    \hfill 
    \begin{subfigure}[b]{0.48\textwidth}
        \centering
        \includegraphics[width=\textwidth]{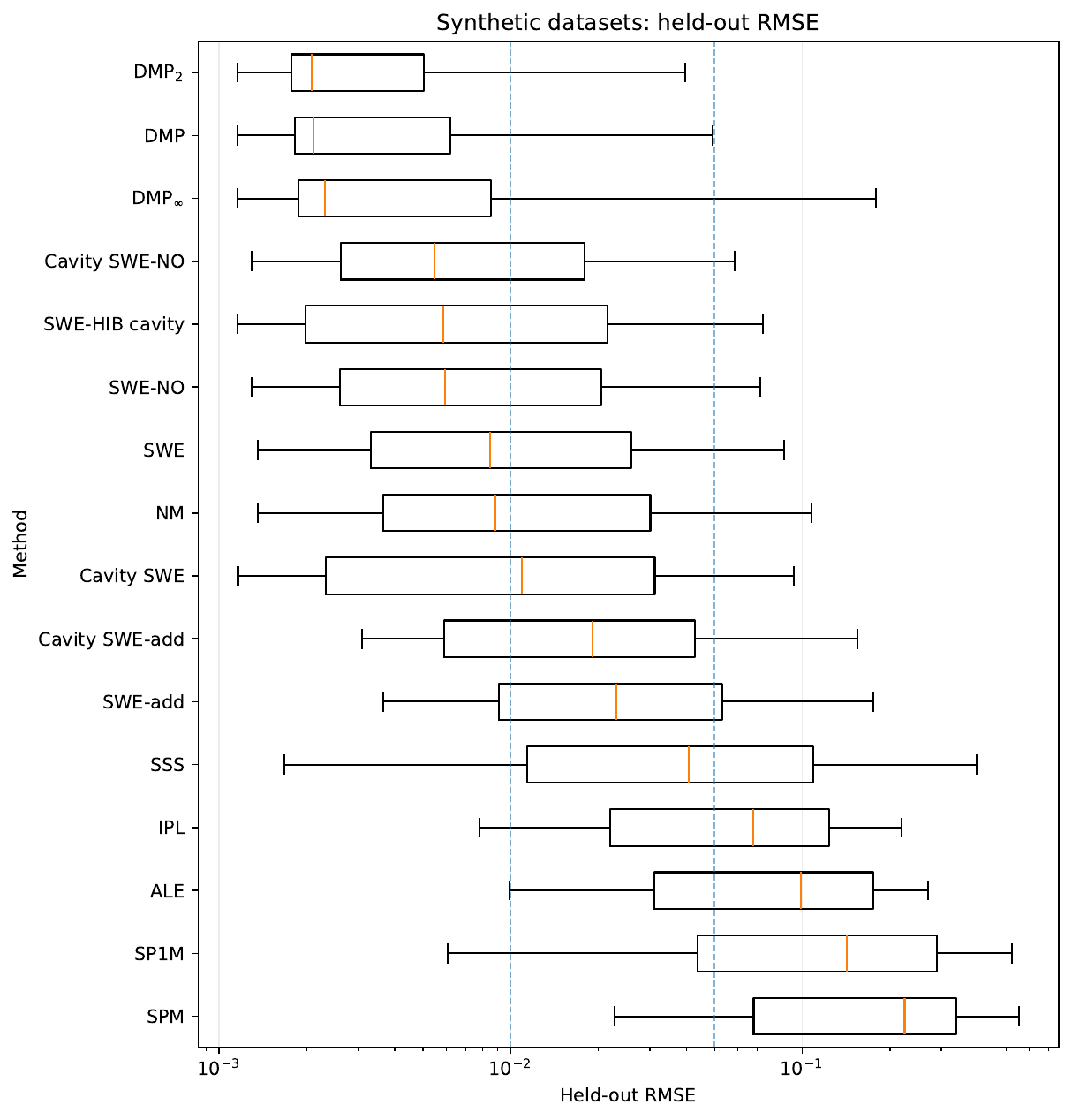}
        \caption{Synthetic benchmarks.}
        \label{fig:pdf2}
    \end{subfigure}
    
    \caption{Median RMSE accuracy of all methods.}
    \label{fig:pooled_rmse}
\end{figure}

\subsection{Step-Wise, Cavity, and Steady-State Approximations}
\label{subsec:stepwise-results}

Among the evaluated non-cavity methods, SWE-NO provides the strongest
overall accuracy. Its pooled median RMSE is \(0.01793\) on the real-world
benchmark and \(0.00596\) on the synthetic benchmark, compared with
\(0.02232\) and \(0.00849\) for standard SWE and \(0.01921\) and
\(0.00885\) for NM. On paired instances, SWE-NO outperforms standard SWE
on approximately \(91.8\%\) of real-world instances and \(93.1\%\) of
synthetic instances. It also outperforms NM on approximately \(80.9\%\)
and \(94.6\%\) of the corresponding instances. These results are
consistent with the empirical observations in
\cite{pozar2026approximation} and support the use of the temporal
noisy-OR accumulation as an effective damping mechanism for the
step-wise approximation.

The cavity extension of SWE-NO provides a further improvement in both
benchmark suites, reducing the pooled median RMSE to \(0.01592\) on the
real-world networks and \(0.00548\) on the synthetic networks. The
effect of cavity correction on standard SWE is less uniform. On the
real-world benchmark, cavity SWE reduces the pooled median RMSE from
\(0.02232\) to \(0.01714\). On the synthetic benchmark, however, its
pooled median RMSE is \(0.01089\), compared with \(0.00849\) for
standard SWE, despite cavity SWE outperforming SWE on a majority of
paired synthetic instances. This indicates that the cavity heuristic
frequently improves the step-wise estimate but can also produce
comparatively large errors in some regimes.

SWE-HIB-C performs particularly strongly. It has the third-lowest pooled
median RMSE on the real-world benchmark, \(0.01019\), and remains among
the strongest step-wise approximations on the synthetic benchmark, with
a median RMSE of \(0.00588\). The strong performance of this hybrid method indicates that approximation
quality depends not only on the state propagated through the network, but also
on how that state is converted into the final node marginals. SWE-HIB-C
combines step-wise cavity propagation with a cumulative DMP-style noisy-OR
readout, and the two mechanisms appear to provide complementary benefits.

This interpretation is also consistent with the different roles of
cavity messages in DMP and SWE. In DMP, an edge-specific message has a
direct interpretation as an approximation to diffusion in a cavity
graph. In the exact step-wise formulation, by contrast, the relevant
neighbor distribution is conditioned on the target node having remained
inactive. Deleting the target node is not equivalent to this
conditioning. The cavity-SWE variants therefore use cavity messages as
a heuristic approximation to the unavailable conditional information,
rather than as an exact reformulation of the step-wise dynamics.

The relative performance of the basic NM and SWE formulations is also
network-dependent. In the real-world benchmark, SWE obtains lower
dataset-level median RMSE on CondMat, Enron, and NetHEPT, while NM
performs better on Epinions, Slashdot, WikiVote, and cit-HepTh. In the
synthetic benchmark the differences are generally smaller: SWE has a
slightly lower pooled median RMSE than NM and performs particularly
well on Watts--Strogatz graphs, whereas the two methods are very close
on several other graph families. We therefore find no general dominance
relation between the basic cumulative and step-wise formulations.

The steady-state SSS approximation is considerably less accurate than
finite-horizon NM in both benchmark suites. Its pooled median RMSE is
\(0.03774\) on the real-world benchmark and \(0.04079\) on the
synthetic benchmark, compared with \(0.01921\) and \(0.00885\) for NM.
SSS has larger RMSE than NM on approximately \(99.2\%\) of paired
real-world instances and on every paired synthetic instance. This is
consistent with repeatedly applying a dependency-neglecting mean-field
update toward its fixed point, which can compound the overestimation
caused by recurrent and dependent propagation paths.

The steady-state DMP variant behaves more favorably than SSS but exhibits
the same general risk in difficult topologies. Its synthetic pooled
median RMSE of \(0.00231\) is close to those of DMP and
\(\operatorname{DMP}_2\), making it the third most accurate method by
this aggregate statistic. However, its error rises sharply on the
Watts--Strogatz family, where its median RMSE is \(0.06856\), compared
with \(0.01450\) for finite-horizon DMP. The pooled median therefore
conceals a substantial upper tail of difficult instances.

\begin{table}[t]
\caption{Pooled held-out performance on the Real-world datasets benchmark. Entries are medians over all successful held-out instance-level evaluations. Lower is better for RMSE, MAE, and runtime; higher is better for Spearman, Pearson, and AUC. The median runtime of MC simulations is $78.31$ s.}
\label{tab:real_world-pooled-metrics}
\begin{tabular}{lrrrrrl}
\toprule
Method & RMSE & MAE & Spearman & Pearson & AUC & Runtime (s) \\
\midrule
DMP$_2$ & 0.0073 & 0.0031 & 0.99606 & 0.99963 & 1.00000 & 0.5657 \\
DMP & 0.0092 & 0.0037 & 0.99511 & 0.99952 & 1.00000 & 0.03938 \\
SWE-HIB cavity & 0.0102 & 0.0050 & 0.99401 & 0.99936 & 0.99999 & 0.04449 \\
DMP$_{\infty}$ & 0.0153 & 0.0048 & 0.99423 & 0.99895 & 1.00000 & 0.5132 \\
Cavity SWE-NO & 0.0159 & 0.0063 & 0.99165 & 0.99881 & 0.99999 & 0.04645 \\
Cavity SWE & 0.0171 & 0.0097 & 0.99123 & 0.99858 & 0.99999 & 0.09963 \\
SWE-NO & 0.0179 & 0.0070 & 0.99066 & 0.99848 & 0.99999 & 0.03323 \\
NM & 0.0192 & 0.0075 & 0.99130 & 0.99826 & 1.00000 & 0.03017 \\
SWE & 0.0223 & 0.0121 & 0.98812 & 0.99740 & 0.99999 & 0.03454 \\
SSS & 0.0377 & 0.0158 & 0.98607 & 0.99552 & 0.99998 & 0.2383 \\
IPL2 & 0.0730 & 0.0399 & 0.96104 & 0.97765 & 0.99935 & 0.0134 \\
IPL & 0.0798 & 0.0436 & 0.95983 & 0.97752 & 0.99146 & 0.005469 \\
SP1M & 0.0958 & 0.0579 & 0.97283 & 0.96555 & 0.99994 & 0.1493 \\
ALE & 0.1040 & 0.0605 & 0.94423 & 0.96230 & 0.96084 & 0.005098 \\
ALE2 & 0.1080 & 0.0631 & 0.93586 & 0.94878 & 0.96861 & 0.01142 \\
SPM & 0.1997 & 0.1301 & 0.91555 & 0.87167 & 0.99885 & 0.1442 \\
\bottomrule
\end{tabular}
\end{table}

\subsection{Linear, Additive, and Path-Restricted Approximations}
\label{subsec:linear-results}

ALE and IPL are among the most computationally inexpensive methods in
both benchmark suites. On the real-world networks their median runtimes
are approximately \(0.00510\) and \(0.00547\) seconds, respectively,
while on the synthetic networks they are approximately \(0.00080\) and
\(0.00095\) seconds. These values are substantially below those of the
edge-cavity message-passing methods.

IPL is also consistently more accurate than ALE. Its pooled median RMSE
is \(0.07978\) compared with \(0.10401\) for ALE on the real-world
benchmark, and \(0.06804\) compared with \(0.09889\) on the synthetic
benchmark. At the instance level, IPL outperforms ALE on approximately
\(96.9\%\) of real-world instances and \(96.0\%\) of synthetic
instances. This provides strong empirical support for the motivation
of IPL in Section~\ref{sec:unified-view}: retaining the inexpensive
linear propagation of ALE while replacing additive accumulation across
propagation depths by a survival product corrects an important source
of repeated counting at essentially the same asymptotic cost.

The second-order incoming corrections ALE2 and IPL2 were evaluated in
the real-world benchmark. Their effect is not uniform. ALE2 obtains
lower RMSE than ALE on approximately \(73.8\%\) of paired instances,
yet its pooled median RMSE increases slightly from \(0.10401\) to
\(0.10798\), and its pooled mean RMSE also increases. Thus, ALE2
produces frequent small improvements but also a smaller number of
sufficiently large degradations to worsen the aggregate statistics.

IPL2 exhibits a more favorable median behavior. It outperforms IPL on
approximately \(74.0\%\) of paired instances and reduces the pooled
median RMSE from \(0.07978\) to \(0.07296\). Its pooled mean RMSE is,
however, higher than that of IPL, again indicating a heavier upper tail
of difficult instances. The second-order incoming correction therefore
cannot be interpreted as a uniform improvement. Its effect depends on
the instance, and the nonlinear propagation-depth aggregation used by
IPL appears to make more robust use of the correction than the additive
ALE aggregation. Importantly, IPL2 retains a median runtime of only
\(0.0134\) seconds and therefore remains among the most scalable methods
in the real-world comparison.

The synthetic benchmark additionally evaluates the additive SWE
variants. Standard additive SWE has a pooled median RMSE of \(0.02304\),
substantially above the \(0.00849\) obtained by nonlinear SWE. Its
cavity extension improves the median to \(0.01909\), but remains less
accurate than the corresponding nonlinear methods. Additive SWE is,
however, extremely inexpensive, with a median runtime of approximately
\(0.00123\) seconds. These results further illustrate the general
accuracy--cost trade-off associated with replacing nonlinear noisy-OR
aggregation by linear addition.

The path-restricted methods show a similar trade-off in a different
approximation domain. SP1M substantially improves upon SPM in both
benchmark suites, reducing the pooled median RMSE from \(0.19971\) to
\(0.09580\) on the real-world networks and from \(0.22446\) to
\(0.14191\) on the synthetic networks. Nevertheless, both remain less
accurate than the strongest node-level and cavity-based methods. This
suggests that restricting admissible propagation paths removes too much
diffusion information in the more demanding benchmark regimes, even
when the shortest-plus-one correction of SP1M is included.

\begin{table}[t]
\caption{Pooled held-out performance on the Synthetic datasets benchmark. Entries are medians over all successful held-out instance-level evaluations. Lower is better for RMSE, MAE, and runtime; higher is better for Spearman, Pearson, and AUC. The median runtime of MC simulations was $71.33$ s.}
\label{tab:synthetic-pooled-metrics}
\begin{tabular}{lrrrrrl}
\toprule
Method& RMSE & MAE & Spearman & Pearson & AUC & Runtime (s) \\
\midrule
DMP$_2$ & 0.0021 & 0.0016 & 0.99931 & 0.99992 & 1.00000 & 0.03201 \\
DMP & 0.0021 & 0.0016 & 0.99923 & 0.99991 & 1.00000 & 0.009151 \\
DMP$_{\infty}$ & 0.0023 & 0.0018 & 0.99924 & 0.99991 & 1.00000 & 0.08248 \\
Cavity SWE-NO & 0.0055 & 0.0039 & 0.99738 & 0.99952 & 1.00000 & 0.007849 \\
SWE-HIB cavity & 0.0059 & 0.0049 & 0.99871 & 0.99984 & 1.00000 & 0.01088 \\
SWE-NO & 0.0060 & 0.0044 & 0.99652 & 0.99944 & 1.00000 & 0.00477 \\
SWE & 0.0085 & 0.0063 & 0.99493 & 0.99904 & 1.00000 & 0.003953 \\
NM & 0.0088 & 0.0065 & 0.99484 & 0.99913 & 1.00000 & 0.003917 \\
Cavity SWE & 0.0109 & 0.0089 & 0.99813 & 0.99971 & 1.00000 & 0.02227 \\
Cavity SWE-add & 0.0191 & 0.0124 & 0.99277 & 0.99693 & 0.99995 & 0.008772 \\
SWE-add & 0.0230 & 0.0160 & 0.98291 & 0.99300 & 0.99989 & 0.001226 \\
SSS & 0.0408 & 0.0372 & 0.99085 & 0.99797 & 1.00000 & 0.02331 \\
IPL & 0.0680 & 0.0535 & 0.95801 & 0.96497 & 0.99967 & 0.0009482 \\
ALE & 0.0989 & 0.0742 & 0.94515 & 0.93441 & 0.99389 & 0.0008016 \\
SP1M & 0.1419 & 0.1210 & 0.93001 & 0.98665 & 1.00000 & 0.02044 \\
SPM & 0.2245 & 0.2014 & 0.84917 & 0.97750 & 1.00000 & 0.01982 \\
\bottomrule
\end{tabular}
\end{table}

\subsection{Dependence on Diffusion Regime}
\label{subsec:regime-results}

Diffusion strength is one of the clearest determinants of approximation
difficulty. For every finite-horizon method in both benchmark suites,
the pooled median RMSE increases from the weak regime
\(g^\star=0.10\) through \(g^\star=0.35\) to the strong regime
\(g^\star=0.60\). The two steady-state approximations constitute the
main exceptions to strict monotonicity in the synthetic benchmark.

The magnitude of this degradation differs substantially across method
families. On the real-world networks, the median RMSE of
\(\operatorname{DMP}_2\) increases from \(0.00238\) at
\(g^\star=0.10\) to \(0.01070\) and \(0.01773\) in the moderate and
strong regimes. On the synthetic networks the corresponding increase
is much smaller, from \(0.00128\) to \(0.00215\) and \(0.00221\).
In contrast, the linear methods deteriorate much more rapidly. For IPL,
the real-world medians are \(0.02570\), \(0.09559\), and \(0.12179\),
while the synthetic medians are \(0.01425\), \(0.07357\), and
\(0.13505\). ALE exhibits an even stronger increase, reaching a median
RMSE of \(0.19042\) in the strongest synthetic regime. These results
highlight that dependency and overlap corrections become increasingly
important as the cascade becomes more extensive.

Most nonlinear methods also become more accurate as the seed fraction
increases. For example, the median RMSE of \(\operatorname{DMP}_2\)
on the real-world benchmark decreases from \(0.01858\) at a \(1\%\)
seed fraction to \(0.00434\) at \(20\%\). On the synthetic benchmark
it decreases from \(0.00801\) to \(0.00187\). Similar trends are
observed for DMP, SWE, SWE-NO, NM, SSS, and the path-based methods.

This seed-fraction effect should nevertheless be interpreted with
caution. The reported RMSE includes the seed nodes themselves, whose
activation probabilities are deterministically equal to one and are
typically reproduced exactly by the approximation methods. Increasing
the seed fraction therefore mechanically increases the proportion of
nodes contributing essentially zero error. Moreover, ALE and IPL do
not exhibit a monotonic decrease in RMSE over all four seed fractions.
The observed seed-fraction dependence consequently cannot be attributed
solely to an intrinsically easier diffusion problem on the non-seed
nodes.

Finally, the edge-probability model also affects approximation
difficulty. When results are pooled over methods, the constant-edge
model yields the lowest median RMSE in both benchmark suites. On the
real-world benchmark, the pooled medians are approximately \(0.0210\),
\(0.0399\), and \(0.0429\) for constant, weighted-cascade, and
heterogeneous-Dirichlet probabilities, respectively. On the synthetic
benchmark they are approximately \(0.0122\), \(0.0130\), and \(0.0230\).
Thus, edge-weight heterogeneity generally makes the benchmark more
difficult.

This ordering is not universal across individual methods. In
particular, the ALE/IPL family sometimes performs as well as or better
under weighted-cascade or heterogeneous probabilities than under the
constant model. The effect of edge heterogeneity therefore interacts
with the approximation mechanism rather than imposing a uniform
difficulty ordering on every method.

\section{Discussion}
\label{sec:discussion}

We now interpret the empirical and theoretical results in the broader
context of analytical IC approximation. We first discuss the principal
sources of approximation error revealed by the benchmark and what the
results suggest about the roles of feedback, dependence, and retained
propagation information. We then consider how the higher-order cavity
construction may extend beyond the IC model. The following subsection
discusses differentiability and the practical problem of selecting
approximation hyperparameters, before we conclude with the main
limitations of the present study and directions for future work.

\subsection{Sources of Approximation Error}
\label{subsec:discussion-error}

The results in Section~\ref{sec:results} reinforce the importance of
feedback-loop correction in analytical approximations of the IC model.
The strong performance of DMP relative to node-level mean-field methods,
and the additional improvement obtained by \(\operatorname{DMP}_2\) in
clustered networks and selected diffusion regimes, indicate that repeated
propagation of information through short cycles is an important source of
approximation error. The synthetic experiments make this particularly
visible: the advantage of the second-order correction is small on graph
families with relatively little local clustering, but becomes substantially
larger on Watts--Strogatz and Holme--Kim networks.

Correcting feedback loops does not, however, remove all sources of
dependence. Even in a cavity graph, activation attempts arriving through
different neighbors of a node need not be independent. Two incoming
branches may share upstream ancestors, interact through longer cycles, or
be simultaneously affected by the same earlier activation events. Most
analytical IC approximations, including the message-passing methods studied
here, retain some form of conditional-independence approximation when
aggregating these incoming contributions. The remaining error after
short-loop correction therefore appears to be associated at least partly
with dependencies between distinct propagation branches.

Modeling such dependencies analytically is considerably more difficult than
suppressing explicit feedback through a bounded recent history. A richer
state representation could in principle retain information about joint
activation events or shared upstream structure, but this may lead to a
rapid increase in computational cost. Determining which dependencies are
most important in practice, and whether they can be represented without
sacrificing the near-linear scalability of message-passing methods, remains
an important direction for future work.

\subsection{Higher-Order Cavity Corrections Beyond the IC Model}
\label{subsec:discussion-higher-order}

Although the higher-order hierarchy developed in this work is derived
specifically for the IC model, the underlying construction is more general
than the noisy-OR update used here. At an abstract level, an ordinary
edge-cavity message is replaced by a message carrying a bounded recent
history, and all inputs corresponding to vertices retained in that history
are removed from the local update. The construction can therefore be viewed
as a finite-memory modification of a local cavity recursion.

An attractive feature of this formulation is that a higher-order state need
only be represented when the additional retained history actually changes
the local update. For the order-two IC construction, the additional
excluded vertex affects the recursion only when it is also an incoming
neighbor of the current node. The nonredundant second-order state space
therefore reduces to directed triangle-closing configurations, yielding
the complexity
\[
O\!\left(T(|E|+\tau)\right),
\]
where \(\tau\) denotes the number of such states. This is substantially
smaller than explicitly maintaining all possible length-two path histories
on many sparse networks.

The same bookkeeping principle could in principle be applied to other
cavity or belief-propagation equations. Whether it leads to an equally
useful approximation is a separate question. The IC model has several
properties that make the hierarchy particularly well behaved: diffusion is
progressive, the local update is monotone in the incoming messages, and
removing additional propagation routes cannot increase the true activation
probability. More general graphical models or recurrent dynamical systems
need not possess these properties, and a larger excluded history may not
produce a one-sided change in the resulting approximation.

This motivates a broader theoretical question. In the IC setting, the
higher-order construction yields a nested sequence of increasingly tight
upper approximations,
\[
p_u(t)
\le
\hat p^{(h+1)}_u(t)
\le
\hat p^{(h)}_u(t).
\]
It would be interesting to determine whether analogous monotonicity results
hold for higher-order cavity corrections in other probabilistic models, and
which structural properties are sufficient to guarantee such an ordering.
Monotonicity of the local update, positive dependence, and monotonicity
under removal of propagation routes are natural candidates. A
characterization of these conditions could provide a more general
theoretical basis for finite-history cavity hierarchies beyond progressive
diffusion.

More expressive loop-corrected and region-based message-passing methods can
represent dependencies that cannot be captured by a bounded path history,
for example correlations between incoming branches created by a larger
shared subgraph. Their additional expressive power is generally obtained
through substantially richer local state spaces. The finite-history
construction studied here therefore occupies a different point in the
accuracy--complexity trade-off: it targets a specific and easily identified
source of feedback while preserving a comparatively simple local recursion.
Extending the construction beyond order two, and characterizing the number
of nonredundant history states required at higher orders, are natural
directions for further work.

\subsection{Differentiability and Practical Parameter Selection}
\label{subsec:discussion-hyperparameters}

A further advantage of analytical diffusion approximations is that several
of the methods considered here are differentiable with respect to the
diffusion parameters. This makes them suitable for settings in which
diffusion estimation is embedded within a larger learning procedure, such
as edge-probability estimation, inverse diffusion, or network-parameter
learning. In contrast to Monte Carlo simulation, the approximation itself
can be represented as a deterministic computational graph and differentiated
using standard automatic-differentiation techniques. We do not investigate
such downstream learning tasks in the present work, but they constitute an
important motivation for computationally inexpensive analytical
approximations.

Practical use of these methods also depends critically on hyperparameter
selection. For most finite-horizon methods the main parameter is the
iteration or propagation horizon \(T\), while compensatory continuation,
introduced in Section~\ref{subsec:Compensatory-Continuation}, additionally
requires the continuation depth \(L\). The benchmark demonstrates that
relatively small changes in these parameters can produce appreciable
differences in both accuracy and runtime. At the same time, the strongest
message-passing methods, particularly the DMP variants, appear less
sensitive to small perturbations of the selected horizon than several of
the simpler approximations.

Our benchmark deliberately gives each method access to calibration
instances drawn from the same experimental setting as its held-out test
instances. The selected parameters should therefore be interpreted as
representing favorable setting-specific calibration rather than a
deployment scenario in which the optimal horizon is known in advance.
In practical applications, parameter selection should preferably be
performed on calibration instances that resemble the intended downstream
setting in graph topology, seed-set size, edge-probability distribution,
and diffusion strength.

We also conducted preliminary experiments aimed at predicting suitable
horizons directly from instance characteristics, including topological
features, seed-set size, secondary activation, and edge-probability
statistics. Simple linear models were unable to predict the selected
horizon reliably. Decision-tree models improved predictive performance,
suggesting a substantially nonlinear relationship between the instance
characteristics and the preferred approximation depth, but the resulting
accuracy was not sufficient to justify inclusion as a standalone result.

A more systematic treatment of automatic parameter selection therefore
remains an interesting direction for future work. One possibility is to
learn a predictor of suitable horizons from graph- and diffusion-level
features. Differentiable approximations also make softer formulations
possible, for example by replacing a hard integer horizon with learnable
weights or gates over successive propagation layers. Such a relaxation
could allow the effective propagation depth to be optimized jointly with
other model parameters using a diffusion-prediction loss. Whether this can
generalize across graph families and diffusion regimes remains an open
question.

\subsection{Limitations and Future Work}
\label{subsec:limitations}

The empirical evaluation covers a broad collection of real-world and
synthetic networks, edge-probability models, seed-set sizes, and diffusion
strengths, but it is necessarily finite. The pooled summaries used
throughout the paper are useful for comparing the overall behavior of the
methods, but they can hide substantial regime-specific variation. The
interaction observed between seed fraction and diffusion strength for
\(\operatorname{DMP}_2\), for example, would be obscured by considering
only a single pooled statistic. 

The reference activation probabilities are estimated using finite Monte
Carlo simulations and consequently contain sampling error. The number of
simulations used in the benchmark makes these estimates substantially more
accurate than the analytical approximations in most regimes, but very small
differences between strong methods should nevertheless be interpreted with
care. Future studies could complement Monte Carlo evaluation with graph
families for which exact IC marginals can be computed, either through
restricted topology, exact dynamic programming, or exhaustive enumeration
on sufficiently small networks.

A related limitation is that the reported error metrics include the seed
nodes. Since seed nodes have deterministic activation probability one and
are generally reproduced exactly by the approximations, increasing the
seed fraction mechanically increases the proportion of nodes with
essentially zero prediction error. Consequently, part of the observed
improvement in all-node RMSE at larger seed fractions does not necessarily
reflect improved prediction of the subsequent diffusion process. Reporting
the corresponding metrics restricted to non-seed nodes would provide a
useful complementary analysis.

The benchmark also relies on setting-specific hyperparameter calibration
using Monte Carlo reference estimates. Such supervision may not be
available in a practical deployment setting, and the reported performance
should therefore not be interpreted as requiring no prior calibration.
Developing robust parameter-selection rules that transfer between related
networks and diffusion regimes remains an important practical problem.

Our analytical and empirical treatment is restricted to the Independent
Cascade model. Several of the underlying ideas, including compensatory
continuation, finite-history cavity corrections, and differentiable
analytical propagation, may extend to other progressive diffusion models,
but their validity and accuracy outside IC have not been established here.

We additionally evaluate the methods only as estimators of node activation
probabilities. We do not test their performance within downstream
optimization procedures such as influence maximization. High accuracy of
node marginals, or even high rank correlation between those marginals, does
not by itself guarantee accurate marginal-gain estimates or high-quality
seed selection. Evaluating the approximations within downstream
optimization and learning tasks therefore remains separate future work.

Finally, the improved accuracy of \(\operatorname{DMP}_2\) comes with a
substantial computational overhead relative to first-order DMP. The method
remained scalable on all benchmark instances, but its cost depends on the
number of directed triangle-closing states and can become large in highly
clustered networks. Moreover, the reported per-evaluation runtime excludes
the topology-specific construction of these second-order states. The method
should therefore be viewed as an accuracy-oriented point in the
accuracy--runtime trade-off rather than as a uniformly preferable
replacement for first-order DMP. Quantifying when the additional
second-order state is worth its cost, and developing more compact
representations of higher-order cavity information, are natural directions
for further work.

\section{Conclusion}
\label{sec:conclusion}

In this work, we provide a broad analysis of analytical approximation
methods for the Independent Cascade model. We organize existing approaches
within a common framework based on three principal dimensions: the temporal
state being propagated, the information retained by the approximation, and
the rule used to aggregate incoming influence. This unified view clarifies
the relationships between several previously distinct approximation
families and motivates a number of new constructions and intermediate
variants. Together, these methods provide a finer spectrum of
accuracy--runtime trade-offs than is apparent when the existing
approximations are considered in isolation.

We complement the theoretical analysis with an extensive empirical
comparison focused on the estimation of node-level marginal activation
probabilities. The benchmark covers both real-world and synthetic networks
and varies graph topology, seed-set size, edge-probability model, and
diffusion strength. The results demonstrate substantial differences between
approximation families and show that their relative performance depends not
only on computational complexity, but also on the structural and dynamical
regime in which they are applied.

To our knowledge, this work constitutes the first comprehensive empirical study of analytical IC approximation methods. We evaluate a broad range of approximation families under a common protocol for node-level marginal activation probabilities across diverse real-world and synthetic diffusion regimes.

Among the evaluated methods, the proposed second-order extension of dynamic
message passing, \(\operatorname{DMP}_2\), achieved the strongest overall
accuracy. It obtained the lowest pooled median RMSE in both benchmark suites,
as well as the lowest dataset-level median RMSE on every real-world network
and the lowest graph-family-level median RMSE on every synthetic family.
Its advantage over first-order DMP is particularly pronounced in regimes
where short-loop effects are important, including highly clustered
synthetic networks and selected low-seed, moderate-to-strong diffusion
settings. This improvement comes with a substantial computational overhead,
but the method remained scalable on all networks considered.

At the opposite end of the accuracy--runtime spectrum, the proposed
Independent Propagation-Layer approximation (IPL) substantially improves
upon ALE while retaining the same \(O(T|E|)\) asymptotic complexity on
sparse networks. The second-order incoming-aggregation correction further
improves IPL on the majority of the real-world benchmark instances, although
its benefit is not uniform. These results illustrate that relatively small
changes to how propagation contributions are aggregated can produce
meaningful accuracy improvements without sacrificing scalability.

More broadly, the experiments reinforce the importance of explicitly
accounting for feedback and dependence in analytical diffusion
approximations. The higher-order DMP hierarchy provides one systematic way
of correcting progressively richer short-loop effects, while the unified
framework exposes additional opportunities for combining temporal,
structural, and aggregation corrections. At the same time, the benchmark
shows that no single approximation mechanism dominates every regime or
every point on the computational trade-off.

Overall, the results suggest that analytical IC approximation should not be
viewed as a choice between a small number of isolated methods, but as a
design space in which retained information, temporal representation, and
aggregation rules can be varied according to the requirements of the
application. We hope that the theoretical framework, new approximation
methods, and benchmark presented here provide a useful basis for developing
more accurate, scalable, and application-specific diffusion estimators.

\backmatter

\section*{Declarations}

\subsection*{Availability of data and materials}
The real-world datasets analysed in this study are publicly available from their original public sources, which are listed in the project repository. WikiVote, Enron, Epinions, Slashdot, CondMat, and cit-HepTh are available from the Stanford Network Analysis Project (SNAP), while NetHEPT is available from the IMPC Github repository. The datasets are cited in the manuscript according to the citation recommendations provided by their respective repositories.

Synthetic datasets and experimental results were generated using the code provided in the public project repository. Due to their size, the generated experimental result files are not included in the repository; however, they can be reproduced using the provided code and experimental configuration.\\

\begin{tabularx}{\textwidth}{@{}l X@{}}
\textbf{Project name:} & \texttt{Independent\_Cascade\_Approximations} \\
\textbf{Project home page:} & \url{https://github.com/Matic054/Independent_Cascade_Approximations} \\
\textbf{Archived version:} & Git commit \texttt{c7bfba7bd7053f8c28aa919830629ca2723262e3} \\
\textbf{Operating system:} & Ubuntu 24.04 LTS \\
\textbf{Programming language:} & Python 3 \\
\textbf{Other requirements:} & Required Python dependencies are listed in the repository \\
\textbf{License:} & MIT \\
\textbf{Restrictions on use by non-academics:} & None \\
\end{tabularx}

\subsection*{Use of artificial intelligence}
GPT-5.6 was used to assist with code implementation and refactoring,
language editing, manuscript organization, and discussion of methodological
and presentational alternatives. The authors specified the research
questions, algorithms, experimental methodology, analyses, and scientific
interpretation, and independently reviewed and verified all AI-assisted
content, code, citations, and experimental outputs. The authors take full
responsibility for the content of the manuscript and the reported results.

\subsection*{Competing interests}
The authors declares that he has no competing interests.

\subsection*{Funding}
Miklós Krész has been supported by the Slovenian Research and Innovation Agency (ARIS) through grants J7-60122, N2-0486, J1-70046, BI-HU/26-27-006 and BI-HU/26-27-007. He is also grateful for the support of the research program CogniCom (0013103) at the University of Primorska. 

\subsection*{Authors' contributions}
MP conceived the study, developed the methods, implemented the experiments,
analysed the results, and drafted the manuscript. MK supervised the study,
contributed to the development and interpretation of the methodology, and
reviewed and edited the manuscript. Both authors read and approved the final
manuscript.

\begin{appendices}

\section{Proof of Theorem~\ref{thm:higher-order-dmp-bound}}\label{app:proof}

\noindent\textbf{Theorem~\ref{thm:higher-order-dmp-bound} (restated).}
Assume the standard Independent Cascade model, where all initial-seed
indicators and live-edge indicators are mutually independent. Let
$p_u(t)$ denote the exact probability that node $u$ is active by time
$t$, and let $\hat p_u^{(h)}(t)$ denote the order-$h$ approximation.
Then, for every node $u$, every $t\geq 0$, and all integers
$h_1\geq h_2\geq 0$,
\[
p_u(t)
\leq
\hat p_u^{(h_1)}(t)
\leq
\hat p_u^{(h_2)}(t).
\]
Consequently,
\[
p_u(t)
\leq \cdots \leq
\hat p_u^{(2)}(t)
\leq
\hat p_u^{(1)}(t)
\leq
\hat p_u^{(0)}(t).
\]

\begin{proof}
A cavity history is an ordered tuple
\[
C=(c_1,\ldots,c_k),
\]
and we write
\[
[C]=\{c_1,\ldots,c_k\}
\]
for the underlying set of vertices occurring in the history. For a node
$u\notin[C]$, let
\[
C\oplus u=(c_1,\ldots,c_k,u)
\]
denote history concatenation. Thus
\[
[C\oplus u]=[C]\cup\{u\}.
\]

Let $p_{u\rightarrow C}(t)$ denote the exact activation probability of
$u$ by time $t$ in the cavity graph
\[
G\setminus[C].
\]
We first derive a one-step upper bound for this exact cavity
probability.

For each
\[
v\in N^{\mathrm{in}}(u)\setminus[C],
\]
let
\[
X_v(t-1)
=
\mathbf 1
\left\{
v\text{ is active by time }t-1
\text{ in }G\setminus[C\oplus u]
\right\},
\]
and let $L_{vu}$ be the indicator that the edge $v\to u$ is live.
Define
\[
Y_v(t)=X_v(t-1)L_{vu}.
\]
Thus $Y_v(t)=1$ precisely when the incoming branch through $v$ can
activate $u$ by time $t$.

The exact cavity probability can therefore be written as
\begin{align}
p_{u\rightarrow C}(t)
&=
p_u+
(1-p_u)
\Pr\left(
\bigcup_{v\in N^{\mathrm{in}}(u)\setminus[C]}
\{Y_v(t)=1\}
\right)
\nonumber\\
&=
1-(1-p_u)
\Pr\left(
\bigcap_{v\in N^{\mathrm{in}}(u)\setminus[C]}
\{Y_v(t)=0\}
\right).
\label{eq:cavity-exact-union}
\end{align}
Equivalently, if $R$ denotes the random set of active incoming
neighbors, Equation~\eqref{eq:cavity-exact-union} is the cavity analogue
of the subset expansion
\[
p_u+
(1-p_u)
\sum_R
p_{R\rightarrow u}(t-1)
\left[
1-\prod_{v\in R}(1-p_{vu})
\right].
\]

Each event $\{Y_v(t)=1\}$ is increasing in the mutually independent
seed and live-edge indicators: changing a seed indicator from inactive
to active, or an edge indicator from blocked to live, cannot destroy an
existing activation path. Hence each event $\{Y_v(t)=0\}$ is
decreasing. By Harris' inequality for product measures
\cite{harris1960lower}, decreasing events are positively associated.
Therefore,
\begin{equation}
\Pr\left(
\bigcap_{v\in N^{\mathrm{in}}(u)\setminus[C]}
\{Y_v(t)=0\}
\right)
\geq
\prod_{v\in N^{\mathrm{in}}(u)\setminus[C]}
\Pr\left(Y_v(t)=0\right).
\label{eq:harris-cavity}
\end{equation}

Moreover, since $X_v(t-1)$ is defined in
$G\setminus[C\oplus u]$, it does not depend on the live-edge indicator
$L_{vu}$. Hence
\[
\Pr(Y_v(t)=1)
=
p_{vu}\,
p_{v\rightarrow C\oplus u}(t-1).
\]
Substituting this into
Equation~\eqref{eq:harris-cavity} and then into
Equation~\eqref{eq:cavity-exact-union} yields
\begin{equation}
\label{eq:exact-cavity-upper-step}
p_{u\rightarrow C}(t)
\leq
1-(1-p_u)
\prod_{v\in N^{\mathrm{in}}(u)\setminus[C]}
\left[
1-p_{vu}\,
p_{v\rightarrow C\oplus u}(t-1)
\right].
\end{equation}
Thus the usual noisy-OR cavity update is obtained by replacing the
positively dependent incoming branches by independent branches with the
same cavity marginals. Positive association implies that this
factorization underestimates the probability that all incoming branches
fail, and therefore overestimates the activation probability.

We now prove a stronger statement than the theorem. For any ordered
history $C$, define
\[
H_h(C)=\tau_h(C),
\]
where $\tau_h(C)$ is the suffix of $C$ containing at most its $h$ most
recent vertices. We claim that, for every history $C$ and every finite
order $h$,
\begin{equation}
\label{eq:full-history-upper}
p_{u\rightarrow C}(t)
\leq
\hat p^{(h)}_{u\rightarrow H_h(C)}(t).
\end{equation}

At $t=0$, both sides equal $p_u$. Suppose that
Equation~\eqref{eq:full-history-upper} holds at time $t-1$ for every
valid cavity history. Applying
Equation~\eqref{eq:exact-cavity-upper-step} gives
\[
p_{u\rightarrow C}(t)
\leq
1-(1-p_u)
\prod_{v\in N^{\mathrm{in}}(u)\setminus[C]}
\left[
1-p_{vu}\,
p_{v\rightarrow C\oplus u}(t-1)
\right].
\]
By the induction hypothesis, applied to node $v$ and the full cavity
history $C\oplus u$,
\[
p_{v\rightarrow C\oplus u}(t-1)
\leq
\hat p^{(h)}_{v\rightarrow H_h(C\oplus u)}(t-1).
\]
Since
\[
H_h(C\oplus u)
=
\tau_h(C\oplus u)
=
\tau_h(H_h(C)\oplus u),
\]
the incoming message on the right is exactly the order-$h$ message
required by the truncated recursion.

The noisy-OR expression
\[
1-(1-p_u)\prod_v(1-p_{vu}x_v)
\]
is coordinatewise non-decreasing in the variables $x_v$. Therefore,
\begin{align}
p_{u\rightarrow C}(t)
&\leq
1-(1-p_u)
\prod_{v\in N^{\mathrm{in}}(u)\setminus[C]}
\left[
1-p_{vu}\,
\hat p^{(h)}_{v\rightarrow H_h(C\oplus u)}(t-1)
\right].
\label{eq:upper-before-truncation}
\end{align}

Since $H_h(C)$ is a suffix of $C$,
\[
[H_h(C)]\subseteq[C],
\]
and hence
\[
N^{\mathrm{in}}(u)\setminus[C]
\subseteq
N^{\mathrm{in}}(u)\setminus[H_h(C)].
\]
Every factor in the survival product lies in $[0,1]$. Consequently,
adding the neighbors forgotten by the truncated history can only
decrease the product:
\[
\prod_{v\in N^{\mathrm{in}}(u)\setminus[H_h(C)]}
\left[
1-p_{vu}\,
\hat p^{(h)}_{v\rightarrow H_h(C\oplus u)}(t-1)
\right]
\leq
\prod_{v\in N^{\mathrm{in}}(u)\setminus[C]}
\left[
1-p_{vu}\,
\hat p^{(h)}_{v\rightarrow H_h(C\oplus u)}(t-1)
\right].
\]
Combining this inequality with
Equation~\eqref{eq:upper-before-truncation} gives
\[
p_{u\rightarrow C}(t)
\leq
1-(1-p_u)
\prod_{v\in N^{\mathrm{in}}(u)\setminus[H_h(C)]}
\left[
1-p_{vu}\,
\hat p^{(h)}_{v\rightarrow H_h(C\oplus u)}(t-1)
\right].
\]
The right-hand side is precisely
\[
\hat p^{(h)}_{u\rightarrow H_h(C)}(t),
\]
which proves Equation~\eqref{eq:full-history-upper}. Taking the empty
history $C=()$ yields
\[
p_u(t)\leq\hat p^{(h)}_u(t)
\]
for every finite $h$.

It remains to compare two approximation orders. Let
$h_1\geq h_2$, and for a common full history $C$ define
\[
H_1=H_{h_1}(C),
\qquad
H_2=H_{h_2}(C).
\]
Because $H_2$ is a suffix of $H_1$,
\[
[H_2]\subseteq[H_1].
\]
We prove by induction on $t$ that
\begin{equation}
\label{eq:history-order-monotonicity}
\hat p^{(h_1)}_{u\rightarrow H_1}(t)
\leq
\hat p^{(h_2)}_{u\rightarrow H_2}(t).
\end{equation}

The claim is immediate at $t=0$. Suppose that it holds at time
$t-1$ for every corresponding pair of histories. For every neighbor
retained by both updates,
\[
v\in N^{\mathrm{in}}(u)\setminus[H_1],
\]
the induction hypothesis gives
\[
\hat p^{(h_1)}_{v\rightarrow H_{h_1}(C\oplus u)}(t-1)
\leq
\hat p^{(h_2)}_{v\rightarrow H_{h_2}(C\oplus u)}(t-1).
\]
Therefore,
\[
1-p_{vu}
\hat p^{(h_1)}_{v\rightarrow H_{h_1}(C\oplus u)}(t-1)
\geq
1-p_{vu}
\hat p^{(h_2)}_{v\rightarrow H_{h_2}(C\oplus u)}(t-1).
\]

Define
\[
A
=
N^{\mathrm{in}}(u)\setminus[H_1],
\]
and
\[
B
=
\left(
N^{\mathrm{in}}(u)\setminus[H_2]
\right)\setminus A.
\]
The order-$h_2$ survival product can then be written as
\[
Q_{h_2}
=
\prod_{v\in A}
\left[
1-p_{vu}
\hat p^{(h_2)}_{v\rightarrow H_{h_2}(C\oplus u)}(t-1)
\right]
\prod_{v\in B}
\left[
1-p_{vu}
\hat p^{(h_2)}_{v\rightarrow H_{h_2}(C\oplus u)}(t-1)
\right].
\]
Since every factor lies in $[0,1]$,
\[
Q_{h_2}
\leq
\prod_{v\in A}
\left[
1-p_{vu}
\hat p^{(h_2)}_{v\rightarrow H_{h_2}(C\oplus u)}(t-1)
\right].
\]
Using the message inequality above on every $v\in A$ then gives
\[
Q_{h_2}
\leq
\prod_{v\in A}
\left[
1-p_{vu}
\hat p^{(h_1)}_{v\rightarrow H_{h_1}(C\oplus u)}(t-1)
\right]
=
Q_{h_1}.
\]
Hence
\[
1-(1-p_u)Q_{h_1}
\leq
1-(1-p_u)Q_{h_2},
\]
which proves
Equation~\eqref{eq:history-order-monotonicity}. Taking the empty history
$C=()$ yields
\[
\hat p^{(h_1)}_u(t)
\leq
\hat p^{(h_2)}_u(t).
\]
Together with
Equation~\eqref{eq:full-history-upper}, this proves
Equation~\eqref{eq:dmp-order-hierarchy}.
\end{proof}

\section{Cavity Extensions}\label{app:cavity_extensions}

The cavity-message approach to the IC model described in \cite{dmp}
provides a general mechanism for suppressing immediate feedback and can
be applied to several of the approximation families considered above.
Since standard DMP can be interpreted as the cavity extension of NM, a
natural first direction is to introduce analogous messages for the
step-wise SWE approximation.

The straightforward cavity extension of SWE replaces the node-level
newly-active quantities by edge-specific quantities. The node update is
\[
\hat r_u(t)
=
\bigl(1-\hat p_u(t-1)\bigr)
\left(
1-
\prod_{v\in N(u)}
[
1-p_{vu}\hat r_{v\rightarrow u}(t-1)
]
\right),
\]
while the corresponding cavity message is
\[
\hat r_{v\rightarrow u}(t)
=
\bigl(1-\hat p_{v\rightarrow u}(t-1)\bigr)
\left(
1-
\prod_{w\in N(v)\setminus\{u\}}
\left[
1-p_{wv}\hat r_{w\rightarrow v}(t-1)
\right]
\right).
\]
We take
\[
\hat r_u(0)=p_u,
\qquad
\hat r_{v\rightarrow u}(0)=p_v,
\]
and obtain the cumulative quantities by accumulating the step-wise
estimates.

The exact step-wise formulation differs fundamentally from the
cumulative cavity formulation. In the former, the neighbor-state
distribution is conditioned on the target node $u$ having remained
inactive up to time $t-1$, whereas a cavity quantity is defined by
removing $u$ from the relevant propagation branch. These operations are
not equivalent in general: conditioning on the survival of $u$ changes
the distribution of the surrounding cascade, while deleting $u$ merely
prevents it from participating in the corresponding feedback path.

The cavity-SWE construction therefore does not arise as an exact
reformulation of the step-wise conditional probabilities. Instead, it
replaces the unavailable conditional neighbor information by
edge-specific nonbacktracking messages,
\[
\hat r_{v\rightarrow u}(t),
\]
thereby suppressing immediate feedback through $u$. We consequently
interpret cavity SWE as a heuristic feedback correction to SWE rather
than as a cavity representation of its exact conditional dynamics.

\paragraph{Cavity SWE-NO.}
The same cavity construction can be combined with the temporal noisy-OR
aggregation of SWE-NO. Let
$\hat a_{v\rightarrow u}(t)$ denote the accumulated cavity activation
quantity associated with the message $v\rightarrow u$. The newly-active
cavity quantity is computed as
\[
\hat r_{v\rightarrow u}(t)
=
\bigl(1-\hat a_{v\rightarrow u}(t-1)\bigr)
\left(
1-
\prod_{w\in N(v)\setminus\{u\}}
\left[
1-p_{wv}\hat r_{w\rightarrow v}(t-1)
\right]
\right),
\]
and is accumulated through
\[
\hat a_{v\rightarrow u}(t)
=
1-
\bigl(1-\hat a_{v\rightarrow u}(t-1)\bigr)
\bigl(1-\hat r_{v\rightarrow u}(t)\bigr).
\]
At the node level, the same construction gives
\[
\hat r_u(t)
=
\bigl(1-\hat a_u(t-1)\bigr)
\left(
1-
\prod_{v\in N(u)}
\left[
1-p_{vu}\hat r_{v\rightarrow u}(t-1)
\right]
\right),
\]
followed by
\[
\hat a_u(t)
=
1-
\bigl(1-\hat a_u(t-1)\bigr)
\bigl(1-\hat r_u(t)\bigr).
\]
Thus cavity SWE-NO combines the immediate-feedback suppression of cavity
SWE with the temporal damping introduced by SWE-NO. As in the
non-cavity version, the noisy-OR accumulation should be interpreted as
an approximation mechanism rather than as an exact probability of a
union of independent activation-time events.

\paragraph{Cavity ALE.}
ALE can be extended similarly by replacing the node-level propagation
state with edge-specific cavity layer messages. Let
$h_{u\rightarrow v}^{(t)}$ denote the propagation mass at $u$ in layer
$t$ when the immediate contribution from $v$ is excluded. Starting from
\[
h_{u\rightarrow v}^{(0)}=p_u,
\]
the cavity layer recursion is
\[
h_{u\rightarrow v}^{(t+1)}
=
\Pi_{[0,1]}
\left(
\sum_{w\in N(u)\setminus\{v\}}
p_{wu}h_{w\rightarrow u}^{(t)}
\right).
\]
The corresponding node-level layer is
\[
a_u^{(t+1)}
=
\Pi_{[0,1]}
\left(
\sum_{v\in N(u)}
p_{vu}h_{v\rightarrow u}^{(t)}
\right),
\]
and cavity ALE aggregates these layers additively,
\[
\hat p_u^{\mathrm{cALE}}(T)
=
\Pi_{[0,1]}
\left(
p_u+\sum_{t=1}^{T}a_u^{(t)}
\right).
\]
Hence the only structural change relative to ALE is that propagation
along $u\rightarrow v$ excludes the immediate reverse contribution
$v\rightarrow u$ before the next layer is formed.

\paragraph{Cavity IPL.}
The cavity extension of IPL uses exactly the same cavity propagation
layers as cavity ALE, but retains the survival-product aggregation of
IPL. Thus, with $a_u^{(t)}$ obtained from the cavity messages above,
\[
\hat p_u^{\mathrm{cIPL}}(T)
=
1-
\prod_{t=0}^{T}
\left(1-a_u^{(t)}\right).
\]
Cavity IPL therefore combines two corrections operating on different
sources of repeated influence: edge-specific cavity propagation
suppresses immediate feedback within a propagation branch, while the
IPL survival product suppresses repeated counting of contributions
arriving at different propagation depths.

\paragraph{Cavity SSS.}
Finally, the natural cavity analogue of SSS is obtained by taking the
steady-state limit of DMP. Instead of iterating the node-level SSS
equation to convergence, the edge-specific DMP messages are iterated
until a fixed point is reached,
\[
\hat p_{u\rightarrow v}^{(\infty)}
=
1-(1-p_u)
\prod_{w\in N(u)\setminus\{v\}}
\left[
1-p_{wu}\hat p_{w\rightarrow u}^{(\infty)}
\right],
\]
after which the node marginals are reconstructed as
\[
\hat p_u^{(\infty)}
=
1-(1-p_u)
\prod_{v\in N(u)}
\left[
1-p_{vu}\hat p_{v\rightarrow u}^{(\infty)}
\right].
\]
Equivalently, this is the limit of standard DMP as
$T\rightarrow\infty$, and was considered alongside finite-horizon DMP
in \cite{dmp}. Note that this method is not novel, we include its description for completion.

\paragraph{SWE-HIB-C.}
We also include a hybrid cavity variant, denoted SWE-HIB-C. This method
combines the step-wise propagation principle of SWE with the
edge-cavity representation of DMP. Let
\(\hat p_{u\to v}(t)\) denote the cumulative cavity estimate for node
\(u\) when the immediate contribution from \(v\) is excluded. We
initialize
\[
\hat p_{u\to v}(0)=p_u^{(0)},
\qquad
\hat p_{u\to v}(-1)=0.
\]
The newly accumulated cavity mass at time \(t-1\) is then
\[
\Delta \hat p_{w\to u}(t-1)
=
\hat p_{w\to u}(t-1)
-
\hat p_{w\to u}(t-2),
\]
and, for \(t=1,\ldots,T\), the cavity states are updated according to
\[
\hat p_{u\to v}(t)
=
1-
\left(1-\hat p_{u\to v}(t-1)\right)
\prod_{w\in N^{\mathrm{in}}(u)\setminus\{v\}}
\left[
1-p_{wu}\Delta\hat p_{w\to u}(t-1)
\right].
\]
Thus, as in SWE, only probability mass newly accumulated during the
preceding step is allowed to propagate, while the cavity representation
suppresses immediate feedback through the destination of the message.

After \(T\) cavity updates, the implementation performs a DMP-style
node-level readout,
\[
\hat p_u(T+1)
=
1-
(1-p_u^{(0)})
\prod_{v\in N^{\mathrm{in}}(u)}
\left[
1-p_{vu}\hat p_{v\to u}(T)
\right].
\]
Under this indexing convention, the parameter \(T\) denotes the number
of cavity-message updates, while the final node-level readout introduces
one additional propagation layer. Hence the returned estimate
corresponds to a maximum propagation depth of \(T+1\). This convention
matches the implementation used in the experiments.

\subsubsection{Relation to Higher-Order and Loop-Corrected Message Passing}
\label{app:higher-order-related-work}

The higher-order DMP construction introduced above provides a systematic
mechanism for correcting short-loop feedback in the Independent Cascade
model by retaining progressively more of the recent cavity history. The
general idea of improving message-passing approximations by enlarging the
state, retaining additional path history, or explicitly accounting for
short loops is, however, well established. Closely related constructions
have appeared in percolation theory, epidemic modeling, graphical-model
inference, and dynamic cavity methods
\cite{lin2017highorder,radicchi2016beyond,zhang2017spectral,
cantwell2019message,kirkley2021belief,mooij2007loop,
delferraro2015dynamic,gao2024recurrent}.
The contribution here is therefore not the general principle of
higher-order or loop-corrected message passing, but its formulation as a
hierarchy of finite-horizon IC marginal approximations with a bounded
rolling cavity history. In particular, the construction extends standard
IC DMP directly, admits a monotonic ordering with the retained history,
and at order two reduces to a sparse collection of nonredundant directed
triangle-closing states.

\paragraph{Higher-order non-backtracking IC bounds.}
The closest IC-specific precedent is the generalized non-backtracking
framework of Abbe et al.~\cite{nbub}. Their $r$-nonbacktracking
construction retains a recent path of vertices and excludes previously
visited vertices when bounding subsequent propagation, thereby producing
progressively tighter upper bounds as the retained path length increases.
Structurally, this is closely related to the rolling history used in our
higher-order DMP hierarchy: ordinary nonbacktracking corresponds to
retaining only the immediately preceding vertex, while additional history
suppresses feedback through progressively longer short cycles. The main
difference lies in the quantity being approximated and in the resulting
state representation. Abbe et al. construct rigorous path-based upper
bounds on the final IC influence, whereas our recursion extends the usual
DMP equations directly and produces finite-horizon node activation
marginals. Moreover, at order two we retain an additional state only when
the extra excluded vertex actually participates in the current incoming
neighborhood, so that the nonredundant correction states correspond
specifically to directed triangle closures.

\paragraph{High-order non-backtracking matrices.}
Lin et al.~\cite{lin2017highorder} introduced high-order
non-backtracking matrices for estimating percolation thresholds. Their
state space consists of directed paths of increasing length, so that
higher-order transitions prevent returns to vertices appearing further
back in the recent path. This is closely related to our use of a bounded
recent history to remove feedback paths not eliminated by standard
edge-based DMP. Their construction, however, is a linear spectral method
whose primary object is the percolation threshold, whereas ours is a
nonlinear probabilistic recursion for IC activation marginals. In
addition, our order-two implementation eliminates history states that do
not alter the cavity update rather than representing all admissible
higher-order path states explicitly.

\paragraph{Triangle-aware percolation message passing.}
Radicchi and Castellano~\cite{radicchi2016beyond} developed a
message-passing approximation for percolation that explicitly discounts
redundant propagation paths created by triangles. Their work demonstrates
the same basic failure mode motivating our second-order construction:
ordinary locally tree-like message passing treats branches connected by
short loops as if they were independent, thereby repeatedly accounting
for correlated propagation routes. Their correction is formulated
specifically around triangular motifs in percolation, while our triangle
correction arises automatically as the $h=2$ member of a general
history-based DMP hierarchy for the IC model.

\paragraph{Triangle-nonbacktracking message passing.}
Zhang~\cite{zhang2017spectral} similarly developed a triangle-aware
message-passing formulation for bond percolation and, by linearizing the
resulting equations, derived the triangle-nonbacktracking matrix. The
method incorporates triangles as explicit structural objects and improves
spectral estimation of the percolation transition on clustered networks.
Our second-order DMP addresses essentially the same type of local
dependence, but the construction is organized differently: triangles are
not introduced as distinguished motifs in advance. Instead, the
higher-order cavity recursion retains recent vertices, and a genuinely new
second-order state appears only when the retained history closes a
triangle. This also embeds the triangle correction in a hierarchy that
extends naturally to longer retained histories.

\paragraph{Message passing with loop-containing neighborhoods.}
Cantwell and Newman~\cite{cantwell2019message} proposed a systematic
family of message-passing methods that replaces the locally tree-like
assumption by calculations over increasingly large local neighborhoods
containing short loops. Increasing the neighborhood radius incorporates
progressively more local cyclic structure and can recover exact results
when the relevant loops are contained within the chosen neighborhoods.
This shares the central principle of our construction that accuracy can be
improved systematically by retaining additional local dependence
information. The retained objects are nevertheless different: their
method enlarges the local graphical region on which a message is defined,
whereas our method keeps the underlying edge-based DMP structure and
enlarges only the recent cavity history carried by the message.

\paragraph{Belief propagation on networks with loops.}
Kirkley et al.~\cite{kirkley2021belief} extended this neighborhood-based
idea to belief propagation for general probabilistic models on networks
with short loops. Their method performs inference using overlapping
loop-containing neighborhoods rather than the independent branches
assumed by ordinary belief propagation, thereby directly representing
local correlations induced by short cycles. The connection to our method
is therefore conceptual rather than model-specific: both approaches
systematically relax the locally tree-like assumption by increasing the
context represented by each message. Our construction is substantially
more specialized, using the progressive structure of IC to retain only a
bounded sequence of recently excluded vertices instead of an entire local
subgraph.

\paragraph{Loop-corrected cavity distributions.}
A more general line of work on loop-corrected belief propagation was
developed by Mooij and Kappen~\cite{mooij2007loop}. Their method first
approximates cavity distributions over the Markov blanket of each
variable and subsequently corrects these distributions by imposing
consistency constraints, explicitly targeting correlations that ordinary
belief propagation neglects in loopy factor graphs. The shared principle
is again the correction of errors caused by treating neighboring cavity
branches as independent. Their approach attempts to represent richer
joint cavity distributions and can therefore incur complexity that grows
rapidly with the local Markov blanket, whereas our IC hierarchy targets a
narrower source of dependence by carrying only finite path history.

\paragraph{Higher-order dynamic cavity approximations.}
Del Ferraro and Aurell~\cite{delferraro2015dynamic} introduced a dynamic
message-passing approximation for reversible kinetic spin systems in
which distributions over spin histories are projected onto finite-order
Markov processes. Increasing the Markov order retains progressively more
temporal history and therefore provides a hierarchy of dynamic-cavity
closures. This is related to our method in its use of bounded memory to
improve a cavity approximation, but the meaning of that memory is
different. Their higher order describes temporal dependence in recurrent
spin trajectories, whereas our history is spatial: it records recently
visited vertices of a propagation branch in order to suppress feedback
through short graph cycles.

\paragraph{Higher-order recurrent DMP with loops.}
Most directly among recent dynamic-message-passing approaches, Gao
et al.~\cite{gao2024recurrent} extend recurrent DMP for epidemic
processes by introducing higher-order cavity messages that omit multiple
vertices. Their construction explicitly addresses feedback through local
loops beyond the immediate edge-level echo removed by ordinary rDMP, and
at the next order the nontrivial additional messages are associated with
triangular structures. This is closely analogous to the mechanism of our
order-two correction. The underlying stochastic processes are
nevertheless different: their method concerns recurrent epidemic
dynamics in which nodes may repeatedly change state and requires a closure
of higher-order dynamic quantities, whereas IC is progressive and admits
a direct live-edge interpretation. Our construction consequently defines
the higher-order approximation through truncated spatial cavity history
and yields finite-horizon IC node marginals, together with the monotonic
hierarchy established above.

\section{Experimental Methodology}\label{app:exp_method}

\subsection{Real-World Networks}
\label{subsec:real-networks}

We collected several real-world networks that are commonly used in IM benchmarks.

The first dataset is NetHEPT \cite{chen2009efficient_nethept}, which is an undirected co-authorship network constructed from papers in the arXiv High Energy Physics – Theory collection, where nodes represent authors and edges represent co-authorship relationships.

WikiVote \cite{leskovec2010signed_wikivote} is a directed social network derived from Wikipedia administrator elections, where nodes represent Wikipedia users and a directed edge from one user to another indicates that the former voted in the latter's administrator election. The data cover elections from the inception of Wikipedia through January 2008.

Enron \cite{leskovec2009community_enron-slashdot} is an email communication network constructed from the Enron email corpus. Nodes correspond to email addresses, with an edge connecting two addresses if at least one email was sent between them; the SNAP representation used here is undirected.

Epinions \cite{richardson2003trust_epinions} is a directed online social network obtained from the consumer review website Epinions.com. Nodes represent users, while directed edges encode user-declared trust relationships forming the site's Web of Trust.

Slashdot \cite{leskovec2009community_enron-slashdot} is a social network derived from the technology news website Slashdot and its ``Slashdot Zoo'' feature. Nodes represent users and edges correspond to friend or foe relationships declared between users; the snapshot used here was collected in February 2009.

ca-CondMat \cite{leskovec2007graph_ca-condmat} is an undirected scientific collaboration network constructed from papers submitted to the Condensed Matter section of arXiv between January 1993 and April 2003. Nodes represent authors, and two authors are connected if they co-authored at least one paper.

cit-HepTh \cite{ForestFire-cit-HepTh} is a directed citation network consisting of papers from the High Energy Physics Theory section of arXiv between January 1993 and April 2003. Nodes represent papers, and a directed edge from one paper to another indicates that the former cites the latter.

\subsection{Synthetic Networks}
\label{subsec:synthetic-networks}

We consider five standard synthetic network models, each generated with \(n=5000\) nodes and target mean degree \(d\in\{4,10,20\}\). The generated graphs are undirected and are subsequently represented in the IC model by replacing every undirected edge with two oppositely directed arcs.

\textbf{Erdős--Rényi (ER).}
The Erdős--Rényi model \cite{erdos_renyi, gilbert1959random} generates a random graph in which every pair of nodes is connected independently with the same probability. For target mean degree \(d\), we set

$$
p=\frac{d}{n-1},
$$

so that the expected mean degree is \(d\).

\textbf{Barabási--Albert (BA).}
The Barabási--Albert model \cite{barabasi1999emergence} generates scale-free networks through preferential attachment: newly introduced nodes preferentially connect to already well-connected nodes. Each new node attaches to

$$
m=\operatorname{round}(d/2)
$$

existing nodes, yielding an asymptotic mean degree of approximately \(2m\).

\textbf{Holme--Kim (HK).}
The Holme--Kim model \cite{holme2002growing} extends preferential attachment with triad formation, producing heavy-tailed degree distributions while allowing substantially higher clustering than the standard BA model. We use

$$
m=\operatorname{round}(d/2)
$$

attachments per new node and triad-formation probability \(p_{\mathrm{triad}}=0.5\).

\textbf{Stochastic Block Model (SBM).}
The Stochastic Block Model \cite{sbm} generates networks with explicit community structure by assigning different connection probabilities within and between groups. We use four equally sized blocks and a mixing parameter \(\mu=0.3\), so that \(70\%\) of the expected degree is assigned to within-community connections and \(30\%\) to between-community connections. For block size \(n/4\),

$$
p_{\mathrm{in}}
=
\frac{(1-\mu)d}{n/4-1},
\qquad
p_{\mathrm{out}}
=
\frac{\mu d}{n-n/4}.
$$

\textbf{Watts--Strogatz (WS).}
The Watts--Strogatz model \cite{watts1998collective} generates small-world networks by starting from a regular ring lattice and randomly rewiring edges. We use initial degree \(k=d\)
and rewiring probability \(\beta=0.1\). Since the implementation requires an even lattice degree, the selected values \(d\in\{4,10,20\}\) satisfy this condition directly.

Across all five families, the benchmark therefore varies network density through \(d\in\{4,10,20\}\), while keeping \(n=5000\) fixed.


\subsection{Diffusion Configurations and Regime Calibration}
\label{subsec:diffusion-configurations}

We evaluate each network under several diffusion configurations that vary the initial seed-set size, the edge-probability model, and the overall diffusion strength. The seed set is selected uniformly at random with seed fraction

$$
s\in\{0.01,0.05,0.10,0.20\},
$$

corresponding to \(1\%\), \(5\%\), \(10\%\), and \(20\%\) of the nodes being initially active.

We consider three base edge-probability models. In the \emph{constant} model, every edge receives the same base probability,

$$
\bar p_{uv}=c,
$$

so differences in diffusion arise entirely from graph structure rather than heterogeneity in the transmission probabilities. In our implementation, the constant generator assigns the same value to every edge before the subsequent global calibration step.

The \emph{weighted-cascade} model assigns

$$
\bar p_{uv}
=
\frac{1}{d_{\mathrm{in}}(v)},
$$

where \(d_{\mathrm{in}}(v)\) is the in-degree of the destination node \(v\). Thus, edges entering high-degree nodes receive smaller probabilities. Equivalently, before global rescaling, the incoming edge probabilities of every non-isolated node sum to one. This normalizes the total incoming transmission strength across nodes of different in-degree and is commonly used in influence-maximization experiments.

Finally, the \emph{heterogeneous Dirichlet} model introduces substantial variation among the incoming transmission probabilities of a node. For each node \(v\) with \(d_{\mathrm{in}}(v)\) incoming edges, a vector of relative incoming weights is sampled from a symmetric Dirichlet distribution,

$$
(w_{1v},\ldots,w_{d_{\mathrm{in}}(v)v})
\sim
\operatorname{Dirichlet}(\kappa,\ldots,\kappa),
$$

and these weights define the corresponding base edge probabilities. We use concentration parameter \(\kappa=0.5.\)

Because \(\kappa<1\), the resulting distribution tends to be uneven: a small number of incoming edges may receive relatively large transmission probabilities while many others receive small probabilities. In contrast to the weighted-cascade model, which distributes the incoming probability mass uniformly across the predecessors of a node, the Dirichlet model preserves the same general normalization while introducing edge-level heterogeneity. The benchmark uses \(\kappa=0.5\) throughout.

Using identical edge probabilities across graphs with different topologies, densities, or seed-set sizes can result in substantially different cascade sizes. We therefore control the effective diffusion strength by calibrating a global scaling parameter separately for each graph realization, edge-probability model, and seed fraction.

Let \(\bar p_e\) denote the base probability generated by one of the three models above. For a global scaling factor \(\alpha\), the final transmission probability is

$$
p_e(\alpha)=\min\{1,\alpha \bar p_e\}.
$$

Consequently, the edge-probability models determine the \emph{relative distribution} of transmission strengths across the network, whereas \(\alpha\) controls their overall magnitude.

We characterize the resulting diffusion strength by the secondary activation rate

$$
g(\alpha)
=
\frac{1}{|V\setminus S|}
\sum_{u\notin S}
\Pr_\alpha(u\text{ is activated}\mid S),
$$

which is the expected fraction of nodes outside the initial seed set that become activated during the cascade. Measuring diffusion relative to the initially inactive population prevents larger seed sets from mechanically producing a larger reported diffusion strength.

We consider three target secondary activation rates,

$$
g^\star\in\{0.10,0.35,0.60\},
$$

representing weak, moderate, and strong diffusion regimes, respectively. Thus, \(g^\star=0.35\), for example, specifies that approximately \(35\%\) of the initially inactive nodes should subsequently become activated through diffusion.

Since IC activation probabilities are monotone non-decreasing in the edge probabilities, the scale \(\alpha\) is determined using a one-dimensional binary search. At each candidate value, \(g(\alpha)\) is estimated using a Monte Carlo pilot simulation. Once a scale satisfying the target within the prescribed tolerance is found, the resulting edge probabilities are fixed and used by all approximation methods and by the higher-budget Monte Carlo reference evaluation.

If the requested target cannot be reached even after all positive edge probabilities have saturated at one, the corresponding configuration is marked as infeasible rather than being replaced by a different diffusion regime. The calibration procedure uses \(250\) Monte Carlo realizations per candidate scale, at most \(10\) binary-search iterations, and a target tolerance of \(0.02\).

\subsection{Hyperparameter Calibration and Held-Out Evaluation}
\label{subsec:hyperparameter-protocol}

To prevent method hyperparameters from being selected on the same instances used for the final comparison, we separate every experimental setting into independent calibration and held-out test instances. For each method, the complete candidate parameter grid is evaluated on three calibration instances. The parameter configuration with the lowest mean calibration Root mean squared error (RMSE) is then selected, frozen, and evaluated without further adjustment on five held-out instances.

For the synthetic benchmark, calibration and test instances use independently generated graph realizations. Thus, the selected hyperparameters must generalize not only to new seed sets and diffusion parameters, but also to previously unseen graphs drawn from the same graph family and target mean degree. Three independent graph/diffusion instances are used for calibration and five independent instances for held-out evaluation.

For the real-world benchmark, the network topology remains fixed within each dataset. Independent instances are instead obtained from independently sampled seed sets and, for the heterogeneous Dirichlet model, independent edge-weight realizations. The constant and weighted-cascade edge models are deterministic conditional on the graph, and therefore only the seed-set realization varies between their calibration and held-out instances. The same three-calibration/five-test split is used as in the synthetic benchmark.

For methods with a propagation-horizon parameter, the candidate grid contains

$$
T\in\{1,\ldots,20\}.
$$

For SWE-NO and its cavity variant, which additionally use compensatory continuation, we consider

$$
L\in\{0,\ldots,5\},
$$

and evaluate the Cartesian product of the candidate \(T\) and \(L\) values. Methods without tunable parameters are evaluated directly.

Parameter selection is performed separately for every combination of network setting, edge-probability model, seed fraction, and target secondary activation rate. Candidate configurations are ranked by their mean RMSE across the three calibration instances. With the selection tolerance set to zero, the configuration attaining the minimum mean calibration RMSE is selected; exact ties are resolved in favor of smaller \(T\), followed by smaller continuation depth \(L\).

Monte Carlo simulation provides the reference node-level activation probabilities used both for hyperparameter calibration and for the final held-out evaluation. In the synthetic benchmark, each calibration instance is evaluated using \(10{,}000\) Monte Carlo realizations, whereas each held-out test instance uses \(50{,}000\) realizations. The larger test budget reduces sampling noise in the final reported accuracy measurements, while the smaller calibration budget limits the substantial computational cost of evaluating the complete hyperparameter grid.

For the real-world benchmark, the Monte Carlo budget is adapted to dataset size. The numbers of realizations used for calibration and held-out evaluation are

$$
\begin{array}{lcc}
\text{Dataset} & \text{Calibration} & \text{Held-out test}\\
\hline
\text{NetHEPT}   & 10{,}000 & 30{,}000\\
\text{WikiVote}  & 10{,}000 & 30{,}000\\
\text{CondMat}   &  8{,}000 & 20{,}000\\
\text{cit-HepTh} &  8{,}000 & 20{,}000\\
\text{Enron}     &  6{,}000 & 15{,}000\\
\text{Epinions}  &  5{,}000 & 10{,}000\\
\text{Slashdot}  &  5{,}000 & 10{,}000 .
\end{array}
$$

The lower budgets used for the larger real-world networks reflect the substantially greater cost of individual Monte Carlo realizations. In all cases, the reference budget used for the held-out evaluation is larger than that used during hyperparameter selection, concentrating computational effort on the measurements reported in the final comparison.

\subsection{Evaluation Metrics}
\label{subsec:evaluation-metrics}

We evaluate each approximation by comparing its vector of estimated
node activation probabilities
\[
\hat{\mathbf p}=(\hat p_u)_{u\in V}
\]
with the corresponding Monte Carlo reference
\[
\mathbf p^{\mathrm{MC}}=(p^{\mathrm{MC}}_u)_{u\in V}.
\]
We consider complementary metrics measuring absolute estimation error,
linear agreement, rank preservation, and computational cost.

Our primary absolute-error metric is the root mean squared error (RMSE),
\[
\mathrm{RMSE}
=
\sqrt{
\frac{1}{|V|}
\sum_{u\in V}
\left(\hat p_u-p^{\mathrm{MC}}_u\right)^2
}.
\]
RMSE penalizes large node-level errors more strongly than small ones and
is also used as the criterion for hyperparameter selection in the
calibration stage. For each experimental setting, the parameter
configuration minimizing mean RMSE across the calibration instances is
selected before the held-out evaluation.

We additionally report the mean absolute error (MAE),
\[
\mathrm{MAE}
=
\frac{1}{|V|}
\sum_{u\in V}
\left|\hat p_u-p^{\mathrm{MC}}_u\right|,
\]
which provides a more direct measure of the typical magnitude of the
node-level estimation error and is less sensitive than RMSE to a small
number of large deviations.

To measure whether an approximation preserves variation in activation
probabilities across nodes, we report the Pearson correlation
coefficient
\[
\rho_P
=
\operatorname{corr}
\left(
\hat{\mathbf p},
\mathbf p^{\mathrm{MC}}
\right).
\]
Pearson correlation measures linear agreement between the estimated and
reference marginals and can remain high even when an estimator exhibits
a systematic difference in scale or offset. It therefore complements,
rather than replaces, the absolute-error metrics.

We place particular emphasis on the Spearman rank correlation,
\[
\rho_S
=
\operatorname{corr}
\left(
\operatorname{rank}(\hat{\mathbf p}),
\operatorname{rank}(\mathbf p^{\mathrm{MC}})
\right).
\]
Unlike RMSE and MAE, Spearman correlation measures whether the method
preserves the relative ordering of nodes rather than their precise
probability values. This is important for rank-sensitive downstream
applications. In particular, influence-maximization procedures
ultimately rely on distinguishing and ordering competing alternatives,
so an approximation that preserves relative influence information may
remain useful even when its probability estimates exhibit some
systematic numerical bias. We emphasize, however, that the Spearman
coefficient measured here ranks node activation marginals for a fixed
diffusion instance; it is not itself a direct measurement of the
ranking of candidate marginal influence gains used by a greedy IM
algorithm.

We additionally compute the area under the receiver operating
characteristic curve (AUC) as a rank-based classification measure.
Nodes whose Monte Carlo activation probability lies at or above the
95th percentile are treated as positive instances, while the estimated
activation probabilities are used as classification scores. Thus, AUC
measures how well an approximation separates the most highly activated
nodes from the remainder of the network. Because the definition is
threshold-based and can be influenced by deterministically active seed
nodes, we treat AUC as a supplementary metric rather than a primary
measure of approximation quality.

Finally, we record the runtime of each approximation. Runtime is
reported together with accuracy because the principal motivation for
analytical IC approximations is to reduce the computational cost of
diffusion estimation relative to high-budget Monte Carlo simulation.
Consequently, no accuracy metric alone characterizes the practical
quality of an approximation; the relevant comparison is the trade-off
between predictive accuracy and computational cost.

\subsection{Implementation Details}\label{app:imp_details}
All approximation methods and Monte Carlo simulations were implemented
in Python within a common computational framework. NumPy was used for
the main numerical operations, with SciPy and scikit-learn used for
selected evaluation utilities and NetworkX used for generating the
synthetic graph topologies. The approximation methods themselves operate
on a custom compact graph representation rather than on NetworkX
objects.

Specifically, every network is converted to the same \texttt{ICGraph}
representation, in which directed edges are stored as contiguous,
source-sorted arrays of source vertices, destination vertices, and edge
probabilities. Incoming-edge indices, reverse-edge indices, and other
structural information required by particular methods are constructed
from this common representation and cached when needed. All
approximation methods therefore receive the same graph object, the same
edge-probability array, and the same initial activation vector, and
return a length-$|V|$ array of node activation-probability estimates
aligned with the same node ordering.

The implementations were designed to make runtime comparisons as
meaningful as possible. In particular, closely related methods share
the same low-level representation and, where applicable, the same
array-based aggregation operations. Incoming sums and products are
implemented using vectorized NumPy operations such as
\texttt{bincount}, while products of many probability terms are
generally evaluated in the logarithmic domain using \texttt{log},
\texttt{log1p}, and \texttt{exp}. Cavity methods use edge-indexed
message arrays over the same underlying edge representation rather than
constructing separate graph objects. Consequently, differences in
runtime primarily reflect the additional state and computation required
by the approximation itself, rather than differences between unrelated
graph libraries or implementations. Methods whose definitions require
additional operations, such as shortest-path searches or repeated
fixed-point computations, retain those operations rather than replacing
them with algorithmically different shortcuts.

Edge probabilities are stored using 32-bit floating-point values in the
benchmark graph representation. Node indices are stored using 32-bit
integers whenever the graph size permits, while structural pointer
arrays use 64-bit integer indexing. Internal arithmetic is
method-dependent: several array-based methods retain the graph's
single-precision representation, whereas methods for which additional
numerical stability is useful explicitly promote working arrays to
64-bit floating point. Probability products are clipped where required
to avoid invalid logarithms and floating-point overflow or underflow.
Thus, all methods receive numerically identical edge-probability inputs,
although their internal working precision is not artificially forced to
be identical.

Runtime is measured as wall-clock execution time using
\texttt{time.perf\_counter}. For the held-out benchmark, timing covers
only the execution of the approximation with its previously selected
hyperparameters; graph loading, diffusion-regime calibration,
hyperparameter selection, Monte Carlo reference generation, and metric
computation are excluded. This separation is important because the
reported runtime is intended to measure the cost of applying an
approximation to an already specified IC instance rather than the cost
of constructing the experimental instance itself. All methods are
evaluated on the same machine and within the same benchmark process.

For second-order DMP, the directed triangle-state indexing depends only
on graph topology and can be reused across diffusion instances.
Accordingly, this topology-dependent indexing is constructed and cached
before the timed approximation call, so the reported runtime measures
the iterative propagation stage rather than the one-time structural
preprocessing cost. We therefore interpret the reported runtimes as
\emph{per-evaluation runtimes on a prepared graph}; preprocessing costs
that can be amortized across multiple diffusion evaluations are not
included in this measure.

All experiments were run on a Dell Precision 7710 workstation with an
Intel Core i7-6820HQ processor (8 logical CPUs), 16 GB of RAM, and a
256.1 GB disk, running Ubuntu 24.04.4 LTS.

The real-world datasets used in the experiments are publicly available
from the sources cited in Section~\ref{subsec:real-networks}, while all synthetic
networks can be regenerated from the graph models, parameters, and
random seeds specified above. The complete implementation, and benchmark
scripts is available at our repository \cite{IC_CODE}.




\end{appendices}




\bibliography{sn-bibliography}

\end{document}